\newtheorem{lemma}{Lemma}[section]
\newtheorem{theorem}{Theorem}[section]
\newtheorem{proposition}[lemma]{Proposition}
\newtheorem{corollary}[lemma]{Corollary}
\theoremstyle{definition}
\newtheorem{definition}[lemma]{Definition}
\theoremstyle{definition}
\newtheorem{remark}[lemma]{Remark}
\theoremstyle{definition}
\numberwithin{equation}{section}
\newcommand{\be}{\begin{equation}}
\newcommand{\ee}{\end{equation}}
\newcommand{\re}{\mathrm{Re}}
\newcommand{\im}{\mathrm{Im}}
\newcommand{\e}{\varepsilon}
\newcommand{\bbR}{{\mathbb R}}
\newcommand{\vp}{\varphi}
\newcommand{\eps}{\epsilon}
\newcommand{\bcr}{\begin{color}{red}}
\newcommand{\ec}{\end{color}\ }
\DeclareMathOperator{\diver}{div}
\def\R{{\mathbb R}}
\def\C{{\mathbb C}}
\begin{document}

\title[Dynamics of Bohmian measures]{On the dynamics of Bohmian measures}

\author[P. Markowich]{Peter Markowich}
\address[P. Markowich]{Department of Applied Mathematics and Theoretical
Physics\\
CMS, Wilberforce Road\\ Cambridge CB3 0WA\\ United Kingdom\\
and Department of Mathematics, College of Science, King Saud University
Riyadh, KSA, and Faculty of Mathematics, University
of Vienna, Nordbergstra§e 15, A-1090 Vienna, Austria}
\email{p.markowich@damtp.cam.ac.uk}
\author[T. Paul]{Thierry Paul}
\address[T. Paul]{CNRS and CMLS,\ Ecole Polytechnique\\ 
91 128 Palaiseau cedex\\ France}
\email{paul@math.polytechnique.fr}
\author[C. Sparber]{Christof Sparber}
\address[C. Sparber]
{Department of Mathematics, Statistics, and Computer Science\\
University of Illinois at Chicago\\
851 South Morgan Street
Chicago, Illinois 60607, USA}
\email{sparber@uic.edu}

\begin{abstract}
The present work is devoted to the study of dynamical features of Bohmian measures, recently introduced by the authors.
We rigorously prove that for sufficiently smooth wave functions the corresponding Bohmian measure 
furnishes a distributional solution of a nonlinear Vlasov-type equation. Moreover, we study the associated defect measures appearing in the classical limit. In one space dimension, this yields a new 
connection between mono-kinetic Wigner and Bohmian measures. In addition, we shall study the dynamics of Bohmian measures associated to so-called semi-classical wave packets. For these type of 
wave functions, we prove local in-measure convergence of a rescaled sequence of Bohmian trajectories towards the classical Hamiltonian flow on phase space. 
Finally, we construct an example of wave functions whose limiting Bohmian measure is not mono-kinetic 
but nevertheless equals the associated Wigner measure. 
\end{abstract}

\date{}

\subjclass[2000]{81S30, 81Q20}
\keywords{Quantum dynamics, Bohmian mechanics, classical limit, kinetic equations, Wigner measure, quantum trajectories, semiclassical wave packet}
\thanks{This publication is based on work supported by Award No. KUK-I1-007-43, funded by the King Abdullah University of Science and Technology (KAUST). C.S. has been supported 
by the Royal Society via his University research fellowship and P.M. by his Royal Society Wolfson Research Merit Award. In addition, P.M. acknowledges support from the Deanship 
of Scientific Research at King Saud University in Riyadh for funding this work through the research group project NoÓRGP- VPP-124 and he also wants to thank the CNRS, since large parts of 
this work were completed while P.M. was  ``chercheur invit\'e" at the ENS Paris during the spring 2010.}

\maketitle


\section{Introduction}
We consider the time-evolution of quantum mechanical wave functions
$\psi^\e (t, \cdot) \in L^2(\R^d; \C)$ governed by the Schr\"odinger equation:
\begin{equation}
\label{sch}
i\e \partial_t  \psi^\e  = -\frac{\e^2}{2}\Delta \psi^\e +
V(x)\psi^\e,\quad
\psi^\e(t=0,x)   = \psi^\e_{0} \in L^2(\R^d),
\end{equation}
where $x \in \R^d$, $t\in \R $,  and $V\in L^\infty(\R^d;\R)$ a given bounded potential (satisfying some additional regularity assumptions given below). In addition, we have rescaled all physical parameters such that only one 
\emph{semi-classical parameter} $0<\e \leq 1$ remains. We shall from now on assume that $\| \psi^\e_0\|_{L^2}=1$, which is henceforth propagated in time, i.e.
\be\label{masscon}
 \| \psi^\e(t) \|_{L^2} =  \| \psi^\e_0 \|_{L^2}= 1.
\ee
In addition, we also have \emph{conservation of energy}, i.e. 
\[
E^\e(t) : =   \frac{\e^2}{2} \int _{\R^d} | \nabla \psi^\e (t,x) |^2 dx + \int_{\R^d} V(x) |  \psi^\e (t,x) |^2 dx  =E^\e(0).
\]
Throughout this work, we shall assume that 
\be\label{assen}
\sup_{0<\e \leq 1} E^\e(0) < +\infty.
\ee
In other words, we assume $\psi_0^\e$ to have bounded initial energy, uniformly in $\e$.
In view of \eqref{masscon}, 
one can define out of $\psi^\e(t,x)\in \mathbb C$ real-valued probability densities from which one computes expectation values of physical observables. Possibly, the two most important such densities are the 
\emph{position} and the \emph{current-density}, given by
\begin{equation}\label{densities}
\rho^\e(t,x)= |\psi^\e(t,x)|^2, \quad J^\e(t,x) = \e \im\big(\overline{\psi^\e}(t,x)\nabla \psi^\e(t,x)\big).
\end{equation}
Already in 1926, the same year in which Schr\"odinger exhibited the eponymous equation, it has been realized by Madelung \cite{mad} that these densities can be used to rewrite \eqref{sch} in
hydrodynamical form. The corresponding \emph{quantum hydrodynamic system} reads 
\begin{equation}
\label{qhd}
\left \{
\begin{aligned}
& \, \partial_t \rho^\e + \diver J^\e = 0,\\
& \, \partial_t J^\e + \diver \left(\frac{J^\e \otimes J^\e}{\rho^\e} \right) + \rho^\e \nabla V = \frac{\e^2}{2} \rho^\e  \nabla \left( \frac{\Delta \sqrt{\rho^\e}}{ \sqrt{\rho^\e}} \right).
\end{aligned}
\right. 
\end{equation}
More precisely, it can be proved that under sufficient regularity assumptions on $V(x)$ and $\psi^\e(t,x)$, each of the nonlinear terms arising in this system is well-defined in 
the sense of distributions, see \cite[Lemma 2.1]{GaMa}. 

The quantum-hydrodynamic system \eqref{qhd} can also be seen as the starting point of \emph{Bohmian mechanics} \cite{Bo1, Bo2}. In this theory, one defines an $\e$-dependent flow-map
\[
X^\e_t: \R\times \R^d \to \R^d ; \quad  x \mapsto X^\e(t,x)
\]
via the following differential equation
\begin{equation*}\label{bohm1}
\dot X^\e(t, x) = u^\e(t,X^\e(t, x)) ,\quad X^\e(0, x) = x\in \R^d,
\end{equation*} 
where the vector field $u^\e$ is (formally) given by
\begin{equation*}\label{velocity}
u^\e(t,x):= \frac{J^\e(t,x)}{\rho^\e(t,x)}= \e \im \left(\frac{ \nabla \psi^\e(t, x)}{ \psi^\e(t, x)} \right ) 
\end{equation*}
and the initial data is assumed to be distributed according to the measure $\rho_0^\e (x) \equiv |\psi^\e_0(x)|^2$. Note that in terms of 
$\rho^\e$ and $J^\e$, the kinetic energy reads
\[
E_{kin}: = \frac{\e^2}{2} \int _{\R^d} | \nabla \psi^\e (x) |^2 dx =\frac{1}{2}  \int _{\R^d} \frac{|J^\e(x)|^2}{\rho^\e(x)} \, dx +  \frac{\e^2}{2} \int _{\R^d}  | \nabla \sqrt{\rho^\e} |^2 dx. 
\]
From energy conservation, we therefore conclude $u^\e \in L^2 (\R^d, \rho^\e dx)$, but not necessarily continuous. Moreover, it
has been rigorously proved in \cite{BDGPZ} (see also \cite{TeTu}) that  $X^\e(t,\cdot)$ is for all $t\in \R$ well-defined $\rho_0^\e$ - $a.e.$ and that 
$$\rho^\e(t,x) = X^\e_t  \, \# \,  \rho^\e_0(x),$$ i.e. $\rho^\e(t,x)$ is the \emph{push-forward} of the initial density $ \rho_{0} ^\e(x)$ under the mapping $X^\e_t: x \mapsto X^\e(t,x)$, see the definition in Equation \eqref{pf1} below.

Whereas, the above dynamics $ x \mapsto X^\e(t,x)\in \R^d$ is stated in physical space, 
Bohmian mechanics can also be reformulated on {\it phase space} $\R^d_x\times \R^d_p$, by using the concept of \emph{Bohmian measures}, recently introduced by the authors in \cite{MPS}: 
\begin{definition} \label{defbohm} 
Let  $\e>0$. For  $\psi^\e \in H^1(\R^d)$, with associated densities $\rho^\e, J^\e$, given by \eqref{densities}, the  
\emph{Bohmian measure} $\beta^\e\equiv \beta^\e[\psi^\e] \in \mathcal M^+(\R^d_x \times \R^d_p)$ is defined by
$$
\langle  \beta^\e , \varphi \rangle := \int_{\R^d} \rho^\e(x)  \varphi \left(x, \frac{J^\e(x)}{\rho^\e(x)} \right) dx , \quad \forall \, \varphi \in C_0(\R^d_x \times \R^d_p),
$$
where $C_0(\R^d_x \times \R^d_p)$ denotes the space of continuous function vanishing at infinity and $\mathcal M^+(\R^d_x \times \R^d_p)$ the set of non-negative Radon measures on phase space.
\end{definition}
It has been shown shown in \cite{MPS} that if $\psi^\e(t,x)$ solves \eqref{sch}, then the corresponding Bohmian measure $\beta^\e(t,x,p)$  is the push-forward of 
\be\label{ini}
\beta^\e[\psi^\e_0 ]\equiv \beta^\e_0(x,p)=\rho^\e_0 (x) \delta(p -u_0^\e(x)),
\ee
under the $\e$-dependent phase space flow 
\be \label{phaseflow} 
 \Phi^\e_t: (x,p) \mapsto (X^\e(t,x,p),P^\e(t,x,p))
\ee induced by 
\begin{equation}
\label{bohm2}
\left \{
\begin{aligned}
& \,  \dot X^\e= P^\e , \ \\
& \, \dot P^\e= - \nabla V(X^\e) -   \nabla V^\e_B(t, X^\e) , 
\end{aligned}
\right. 
\end{equation}
where $V^\e_B(t,x)$, denotes the so-called \emph{Bohm potential} (see \cite{DuTe, fs, tp}):
\begin{equation}
\label{bohmpot}
V^\e_B(t,x):= -\frac{\e^2}{2} \frac{\Delta \sqrt{\rho^\e(t,x)}}{ \sqrt{\rho^\e(t,x)}}.
\end{equation}
More precisely, under mild regularity assumptions on $V$, the flow $\Phi^\e_t$ is shown to exists globally in time 
for almost all $(x,p) \in \R^{2d}$, \emph{relative to the measure} $\beta_0^\e$ and is continuous in time on its maximal open domain, cf. \cite[Lemma 2.5] {MPS}. 
Note that the specific form of the initial data \eqref{ini} implies that the phase-space flow $\Phi^\e_t$ is projected 
onto the graph of $u^\e_0$, i.e.
$$
\mathcal L^\e: = \{ (x,p) \in \R^d_x\times \R^d_p : p = u^\e_0( x) \},
$$
whose time-evolution is governed by the Bohmian flow \eqref{bohm2}. \\

The fact that $\beta^\e(t) =   \Phi^\e_t  \, \# \,  \beta^\e_0(x)$, is usually called \emph{equivariance} of Bohmian measures (see \cite{DuTe}) and makes $\beta^\e(t)$ 
a natural starting point for the investigation of the classical limit as $\e \to 0_+$ of Bohmian mechanics. 
In \cite{MPS} we gave an extensive study (invoking Young measure theory) of the possible oscillation and concentration phenomena appearing in $\beta^\e$ as $\e \to 0_+$ and compared our findings to the 
by now classical theory of \emph{Wigner measures}, cf. \cite{GMMP, LiPa, SMM}. One thereby associated to any wave function $\psi^\e$ a \emph{Wigner function} $w^\e[\psi^\e]\equiv w^\e$, defined by \cite{Wi}:
\begin{equation*}
w^\e (t,x,p): = \frac{1}{(2\pi)^d} \int_{\R^d}
\psi^\e\left(t,x-\frac{\e}{2}y 
\right)\overline{\psi^\e} \left(t,x+\frac{\e}{2}y
\right)e^{i y \cdot p}\,  dy. 
\end{equation*}
It is well known that although $w^\e(t,x,p)\not \geq0$ in general, it admits as $\e \to 0_+$ a weak limit $w(t) \in \mathcal M^+(\R^d_x \times \R^d_p)$, usually called \emph{Wigner measure} (or semi-classical measure). 
The latter is known to give the possibility to  describe in a ``classical'' manner the expectation values of physical observables via
\begin{equation*}
\lim_{\e\to 0} \langle \psi^\e(t), \text{Op}^\e(a)\psi^\e(t)\rangle_{L^2} =  \iint_{\R^{2d}} a(x,p) w(t, x,p) dx \, dp,
\end{equation*}
where the physical observable $ \text{Op}^\e(a)$ is a self-adjoint operator obtained from the classical symbol $a\in C^\infty_{\rm b}(\R^d_x \times \R^d_p)$ through Weyl-quantization, see, e.g., \cite{GMMP, SMM} for a precise definition.

Similarly to that, we were able to establish in \cite{MPS} the existence of a limiting non-negative phase space measure $ \beta (t) \in \mathcal M^+(\R^d_x \times \R^d_p)$, such that, after extracting an 
appropriate sub-sequence (denoted by the same symbol):
$$
\beta^\e  \stackrel{\e\rightarrow 0_+
}{\longrightarrow} \beta \quad \text{in $L^\infty(\R_t; \mathcal M^+(\R^d_x \times \R^d_p)) \, 
{\rm weak}^\ast$}.
$$
If, in addition, $\psi^\e(t)$ is \emph{$\e$-oscillatory}, i.e. 
\be \label{cond}
\sup_{0<\e \le 1}( \| \psi^\e(t)  \|_{L^2} +   \| \e \nabla \psi^\e(t)  \|_{L^2}) < + \infty.
\ee
one can prove (cf. \cite[Lemma 3.2]{MPS}) that the limiting phase space measure $\beta(t)$ incorporates the classical limit of the {position}
and current density in the sense that
\begin{equation}\label{limdensities}
\rho^\e(t,x)  \stackrel{\e\rightarrow 0_+
}{\longrightarrow} \int_{\R^d} \beta(t, x,dp) , 
\end{equation}
and 
\begin{equation}\label{limdensities1}
J^\e(t,x)  \stackrel{\e\rightarrow 0_+
}{\longrightarrow} \int_{\R^d} p \beta(t, x,dp) .
\end{equation}
Hereby the limits have to be understood in $ \mathcal M^+(\R^d_x) \, 
{\rm weak}^\ast$, uniformly on compact time-intervals $I\subset \R_t$. Note that condition \eqref{cond} is satisfied \emph{for all} $t\in \R$, in view of \eqref{masscon},
the conservation of energy and our initial assumption \eqref{assen} (since for any $V\in L^\infty(\R^d)$ we can be, assume without loss of generality, $V(x)\ge 0$). This is the reason to 
impose \eqref{masscon} and \eqref{assen}, throughout this work. In summary, the limiting Bohmian measure $\beta (t)$ 
yields the classical limit of the quantum mechanical position and current densities, by taking the zeroth and first moment with respect to $p\in \R^d$, analogous to the case of Wigner measures. 

While \cite{MPS} establishes several links between Wigner and Bohmian measures, it is {\it not} concerned with the dynamical properties of $\beta^\e(t)$ and its corresponding limit $\beta(t)$ 
(except for a short discussion on connections to WKB analysis before caustics, see  \cite[Proposition 6.1]{MPS} and Remark \ref{WKBrem} below). 
In contrast to that, the goal of the present work is do establish several {\it dynamical} result. The theorems obtained below can therefore be seen as independent from 
\cite{MPS}, except for the basic existence and weak convergence results, recalled above.

\section{Main results}\label{mres}
The main objective of the current work is to analyze the dynamics of $\beta^\e(t)$. Note that the equivariance property strongly suggests that $\beta^\e(t)$ satisfies the following 
nonlinear equation (of Vlasov-type):
\begin{equation}\label{maineq}
\left \{
\begin{aligned}
& \partial_t \beta^\e + p \cdot \nabla_x \beta^\e - \nabla_x \left(V  -\frac{\e^2}{2} \frac{\Delta_x \sqrt{\rho^\e}}{ \sqrt{\rho^\e}}\right) \cdot \nabla_p \beta^\e = 0,\\
& \rho^\e(t,x) = \int_{\R^d} \beta^\e(t,x,dp),
\end{aligned}
\right.
\end{equation}
subject to initial data $\beta_0^\e(x,p)$ given by \eqref{ini}. Note that the system \eqref{maineq} can be written in one
line as
\begin{equation*}\label{oneline}
\partial_t \beta^\e + p \cdot \nabla_x \beta^\e - \nabla_x \left(V 
- \frac{\e^2}{2}\frac{\Delta_x  \sqrt{\int_{\R^d} \beta^\e dp}}{\sqrt{\int_{\R^d} \beta^\e dp}} \right) \cdot \nabla_p \beta^\e = 0,
\end{equation*}
Due to the strong nonlinear nature of this equation and in particular due to possible singularities at points $x\in \R^d$ where $\rho^\e(t,x) = 0$, it is 
a non-trivial task to rigorously prove that $\beta^\e(t)$ furnishes a distributional solution to \eqref{maineq}. It will be one of the goals of this work to show that this is indeed the case. To this end, we shall derive 
appropriate bounds for all terms arising in the weak formulation of \eqref{maineq} after being evaluated at Bohmian measures. This rigorously establishes the existence of a nonlinear evolution equation for $\beta^\e(t)$ in a similar spirit 
as the results of \cite{GaMa} for the quantum hydrodynamic system \eqref{qhd}. More precisely, the first main result of this work is as follows:

\begin{theorem}\label{main}
Let $V\in C^1_{\rm b}(\R^d;\R)$ and $\psi^\e_0\in H^3(\R^d)$ with corresponding $\rho_0^\e, J_0^\e$ defined by \eqref{densities}. Then, for all $\e >0$, the Bohmian measure
$$\beta^\epsilon(t,x,p) = \rho^\e(t,x) \delta \left(p - \frac{J^\e(t,x)}{\rho^\e(t,x)}\right) ,$$
is a weak solution of \eqref{maineq} in $\mathcal D'(\R_t \times \bbR^d_x\times \bbR^d_p)$ and in $\mathcal D'([0,\infty) \times \bbR^d_x\times \bbR^d_p)$ with initial data \eqref{ini}.
\end{theorem}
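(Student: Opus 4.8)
The plan is to verify the weak (distributional) formulation of \eqref{maineq} directly, exploiting the fact that $\beta^\e(t)$ is \emph{mono-kinetic}: it is carried by the graph $\{p=u^\e(t,x)\}$ with $u^\e=J^\e/\rho^\e$. This special structure should reduce the Vlasov-type equation to the two equations of the quantum hydrodynamic system \eqref{qhd}, namely the continuity equation (the zeroth $p$-moment of \eqref{maineq}) and the momentum equation (the first $p$-moment), both of which are known to hold in $\mathcal D'$ under the present hypotheses by the Gasser--Markowich analysis \cite[Lemma~2.1]{GaMa}. The equivariance $\beta^\e(t)=\Phi^\e_t\,\#\,\beta^\e_0$ is the conceptual reason this must work: differentiating $\langle\beta^\e(t),\varphi\rangle=\int\varphi\circ\Phi^\e_t\,d\beta^\e_0$ in $t$ and inserting $\dot X^\e=P^\e$, $\dot P^\e=-\nabla_x(V+V^\e_B)$ from \eqref{bohm2} produces, at least formally, exactly the weak form of \eqref{maineq}.

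Concretely, I would fix $\varphi\in C^\infty_c$ and, using Definition \ref{defbohm}, write the three contributions
\[
\langle\beta^\e,\partial_t\varphi\rangle+\langle\beta^\e,p\cdot\nabla_x\varphi\rangle-\langle\beta^\e,\nabla_x(V+V^\e_B)\cdot\nabla_p\varphi\rangle
\]
as space--time integrals of $\rho^\e$ and $J^\e$ against $\varphi$ and its derivatives evaluated along $p=u^\e(t,x)$. Setting $\Phi(t,x):=\varphi(t,x,u^\e(t,x))$ and $G:=[\nabla_p\varphi]_{p=u^\e}$, the chain rule rewrites $[\partial_t\varphi]_{p=u^\e}$ and $[\nabla_x\varphi]_{p=u^\e}$ through $\partial_t\Phi,\nabla_x\Phi$ and $G$, whence the first two terms split into a piece $\int\!\!\int(\rho^\e\partial_t\Phi+J^\e\cdot\nabla_x\Phi)$ — annihilated by the continuity equation $\partial_t\rho^\e+\diver J^\e=0$ — and a remainder pairing $G$ with $\partial_t J^\e+\diver(J^\e\otimes J^\e/\rho^\e)$. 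The force term supplies $-\,G\cdot(\rho^\e\nabla_x V+\rho^\e\nabla_x V^\e_B)$, so the entire non-transport contribution is $G$ paired against
\[
\partial_t J^\e+\diver\Bigl(\tfrac{J^\e\otimes J^\e}{\rho^\e}\Bigr)+\rho^\e\nabla_x V+\rho^\e\nabla_x V^\e_B,
\]
which is identically zero: it is precisely the momentum equation of \eqref{qhd}, since its right-hand side equals $-\rho^\e\nabla_x V^\e_B$ by \eqref{bohmpot}.

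The main obstacle is making this rigorous at the \emph{vacuum set} $\{\rho^\e=0\}$, where $u^\e$ is undefined and $J^\e\otimes J^\e/\rho^\e$ and $\Delta_x\sqrt{\rho^\e}/\sqrt{\rho^\e}$ are singular. This is where $\psi^\e_0\in H^3$ enters. I would first record that it furnishes enough regularity of $\psi^\e(t)$, hence of $\rho^\e,J^\e$, for each term above to be a genuine $L^1_{\rm loc}$ function and for the two hydrodynamic identities to hold in $\mathcal D'$, exactly in the spirit of \cite[Lemma~2.1]{GaMa}. The key algebraic point is the \emph{conservative form}: the singular Bohm force satisfies $\rho^\e\nabla_x V^\e_B=\diver\sigma^\e-\diver(J^\e\otimes J^\e/\rho^\e)$ with the quantum stress tensor $\sigma^\e:=\e^2\,\re\!\bigl(\nabla\psi^\e\otimes\nabla\overline{\psi^\e}\bigr)-\tfrac{\e^2}{4}\nabla^2_x\rho^\e\in L^1$, so that the singular term $\diver(J^\e\otimes J^\e/\rho^\e)$ coming from the transport remainder is exactly cancelled, leaving the integrable identity $\partial_t J^\e+\diver\sigma^\e+\rho^\e\nabla_x V=0$. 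A second delicate point is that $\Phi$ and $G$ are only bounded and measurable (compositions of smooth data with the merely $L^2(\rho^\e dx)$ velocity $u^\e$), so the integrations by parts cannot be carried out naively. I expect to resolve this either by mollifying $u^\e$ and passing to the limit via the $L^1$ bounds just established, or — cleanly avoiding compositions altogether — by performing the time-differentiation along the $\beta^\e_0$-a.e.\ defined, time-differentiable Bohmian flow of \cite[Lemma~2.5]{MPS}, dominating $\nabla\varphi(\Phi^\e_t)\cdot\dot\Phi^\e_t$ by the $\beta^\e_0$-integrable majorant $|P^\e|+\|\nabla V\|_\infty+|\nabla_x V^\e_B(t,X^\e)|$; the last summand is exactly controlled by the local-in-time bound $\rho^\e\,\nabla_x V^\e_B\in L^1_{\rm loc}$.

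Finally, for the statement on $[0,\infty)$ I would add the attainment of the initial data \eqref{ini}. Since Schr\"odinger well-posedness gives $\psi^\e\in C(\R_t;H^1)$ (indeed $C(\R_t;H^3)$), the densities $\rho^\e(t),J^\e(t)$ depend continuously on $t$ in $L^1$, so $\beta^\e(t)\to\beta^\e_0$ weakly-$\ast$ as $t\to0^+$; this produces the boundary contribution $\langle\beta^\e_0,\varphi(0,\cdot)\rangle$ in the weak formulation for test functions not vanishing at $t=0$. A density argument in $\varphi$ then upgrades the pointwise-in-$t$ identity to the full distributional statements in $\mathcal D'(\R_t\times\R^d_x\times\R^d_p)$ and $\mathcal D'([0,\infty)\times\R^d_x\times\R^d_p)$.
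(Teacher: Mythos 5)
Your overall architecture is reasonable and your route (b) even anticipates the paper's own device (differentiating $\langle\beta^\e(t),\varphi\rangle$ along the Bohmian flow and recognizing a total time derivative), and your conservative-form identity $\rho^\e\nabla_x V^\e_B=\diver\sigma^\e-\diver(J^\e\otimes J^\e/\rho^\e)$ is correct. However, there is a genuine gap at exactly the point where the theorem is hard: the integrability of the Bohm-force term near the vacuum set. You justify the dominated-convergence step (and, earlier, the claim that each term of the momentum equation is ``a genuine $L^1_{\rm loc}$ function'') by asserting that $\rho^\e\nabla_x V^\e_B\in L^1_{\rm loc}$. This does not follow from anything you establish: the conservative form only exhibits $\rho^\e\nabla_x V^\e_B$ as a \emph{distribution} (the divergence of the $L^1$ tensor $\sigma^\e$ minus the divergence of $J^\e\otimes J^\e/\rho^\e\in L^1$); it does not make the pointwise product $\rho^\e\,|\nabla_x V^\e_B|$ an integrable \emph{function}, and likewise $\diver(J^\e\otimes J^\e/\rho^\e)$ and $\rho^\e\nabla_x V^\e_B$ are not individually in $L^1_{\rm loc}$ --- only their sum $\diver\sigma^\e$ is (this is where $H^3$ enters). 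Concretely, writing $\rho^\e\nabla_x V^\e_B$ in terms of $\nabla\Delta\rho^\e\in L^1$ and $\diver\bigl(\nabla\sqrt{\rho^\e}\otimes\nabla\sqrt{\rho^\e}\bigr)$, the latter obeys the pointwise bound of Proposition \ref{te}, which contains the factor $\bigl|\im(\nabla\psi^\e/\psi^\e)\bigr|=|u^\e|/\e$; this factor can blow up on approach to $\{\rho^\e=0\}$ and is not controlled by the energy bound $u^\e\in L^2(\rho^\e dx)$. Hence your proposed majorant $|P^\e|+\|\nabla V\|_\infty+|\nabla_x V^\e_B(t,X^\e)|$ is not known to be $\beta^\e_0$-integrable, and the domination argument collapses; the same unresolved difficulty sits inside your chain-rule rearrangement, since pairing the momentum equation with the merely measurable $G=[\nabla_p\varphi]_{p=u^\e}$ presupposes exactly this kind of bound.

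The paper's proof exists precisely to circumvent this: it never needs the unweighted bound, but instead proves the weighted estimate (Lemma \ref{ppf1}, built on Proposition \ref{te} and Corollary \ref{tec})
\[
\int_0^\infty\int_{\R^d}|\chi|\,|\nabla\sigma(u^\e)|\,|\nabla_x V^\e_B|\,\rho^\e\,dx\,dt<+\infty ,
\]
in which the compact support of the test function in the \emph{momentum} variable is essential: the singular factor is absorbed via $|u^\e|\,|\nabla\sigma(u^\e)|\le\sup_{\xi}|\xi\,\nabla\sigma(\xi)|$. With this weighted bound in hand, the generalized push-forward formula (valid for test functions that are merely $\mu$-integrable, after McCann) applies, and the flow argument you sketch in route (b) goes through verbatim. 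So your plan is repairable, but it is incomplete without this estimate: you must keep the factor $\nabla_p\varphi$ evaluated at $p=u^\e$ inside the majorant rather than discarding it, and prove the resulting weighted integrability. Your alternative route (a), mollifying $u^\e$, is left entirely unexecuted and faces the same obstruction in the commutator/error terms concentrating near the vacuum set.
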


In a second step we shall study the classical limit of equation \eqref{maineq}. To this end, let us recall the following definition used in \cite{MPS}:
\begin{definition}[Mono-kinetic measure] \label{mk}
A measure 
 $\mu \in \mathcal M^+(\R^d_x\times\R^d_p)$ is said to be \emph{mono-kinetic}, if there exists a $\rho \in \mathcal M^+(\R_x^d)$ and a function $u(x)$ defined $\rho$ - $a.e.$, such that 
$$
\mu(x,p) = \rho(x) \, \delta (p-u(x)).
$$
\end{definition}
Mono-kinetic phase space measures define velocity distributions, which are given by the graph of $u(x)$ 
and are thus particularly interesting for our purposes since, clearly, $\beta^\e(t)$ is mono-kinetic by definition, its limit  
however, will not be in general. 

To get further insight, we pass to the limit $\e \to 0_+$ (after extraction of a subsequence) 
in the first three linear terms of equation \eqref{maineq}. This naturally leads to the following definition of a possible \emph{defect} $\mathcal F(t,x,p)\in \R^d$: Along a chosen sub-sequence $\{ \e_n \}_{n \in \mathbb N}$, let 
\begin{equation}\label{F}
\mathcal F:=  \lim_{\e \to 0_+}  (\diver_p (\nabla_x V^\e_B \beta^\e)), \quad \mbox{in $\mathcal D'(\R_t\times \R^d_x\times \R^d_p)$,}
\end{equation}
such that
\begin{equation*}
 \partial_t \beta + p \cdot \nabla_x \beta - \nabla_x  V  \cdot \nabla_p \beta = \mathcal F ,
\end{equation*}
A partial characterization of the defect $\mathcal F$, will be given in Section \ref{sec: def}. In particular, it leads to the following result:

\begin{theorem}\label{thneu} Let $d=1$ and $\beta^\e(t)$ solve \eqref{maineq}. In addition assume that at $t=0$, the limiting measure satisfies
\[
\beta_0(x,p) = w_0(x,p) = \rho_0(x) \delta(p -u_0(x)),
\]
where $w_0(x,p)$ is the Wigner measure associated to $\psi^\e_0(x)$. Then, on any time-interval $I\subseteq \R_t $ on which $w(t)$ is mono-kinetic, it holds
\[ 
\beta(t,x,p) = w(t,x,p)=\rho(t,x) \delta(p -u(t,x)),
\]
in the sense of measures.
\end{theorem}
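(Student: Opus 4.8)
The plan is to show that the mono-kinetic structure of the Wigner measure forces the limiting Bohmian measure to be mono-kinetic as well, with the same density and velocity, so that the two coincide. First I would record that in one dimension $\beta(t)$ and $w(t)$ share their first two $p$-moments: by \eqref{limdensities}--\eqref{limdensities1} one has $\int_\R \beta(t,x,dp)=\rho(t,x)=\int_\R w(t,x,dp)$ and $\int_\R p\,\beta(t,x,dp)=J(t,x)=\int_\R p\,w(t,x,dp)$, where $\rho,J$ denote the weak-$*$ limits of $\rho^\e,J^\e$ and where the Wigner moment identities $\int_\R w^\e\,dp=\rho^\e$, $\int_\R p\,w^\e\,dp=J^\e$ already hold at fixed $\e$. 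Since condition \eqref{cond} holds for all $t$, the family is $\e$-oscillatory and no zeroth-order mass escapes to $|p|=\infty$.

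The core of the argument is an energy comparison. At fixed $\e$ the second $p$-moment of the Wigner function is (twice) the quantum kinetic energy density, so integrating gives $\iint_{\R^2}p^2 w^\e\,dx\,dp=\e^2\int_\R|\partial_x\psi^\e|^2\,dx$, whereas the Bohmian measure, being mono-kinetic with velocity $u^\e=J^\e/\rho^\e$, satisfies $\iint_{\R^2}p^2\beta^\e\,dx\,dp=\int_\R (J^\e)^2/\rho^\e\,dx$. The Madelung decomposition recalled in the introduction then yields the exact identity
\be
\frac12\iint_{\R^2}p^2 w^\e\,dx\,dp=\frac12\iint_{\R^2}p^2\beta^\e\,dx\,dp+\frac{\e^2}{2}\int_\R|\partial_x\sqrt{\rho^\e}|^2\,dx,
\ee
whose last (osmotic) term is non-negative. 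On the other hand, the hypothesis $w(t)=\rho(t)\delta(p-u(t))$ on $I$ identifies its second moment with the minimal value $\iint_{\R^2} p^2 w=\int_\R J^2/\rho\,dx$, while Jensen's inequality applied to the non-negative measure $\beta(t,x,\cdot)$ of mass $\rho$ and mean $u=J/\rho$ gives the pointwise bound $\int_\R p^2\beta(t,x,dp)\ge J(t,x)^2/\rho(t,x)$, with equality if and only if $\beta(t,x,\cdot)$ is a single Dirac mass at $u(t,x)$.

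To close the argument I would pass to the limit $\e\to0_+$ along the chosen subsequence. The quantity $\e^2\int_\R|\partial_x\psi^\e|^2\,dx$ converges (by conservation of energy, the uniform bound \eqref{assen}, and $V\in C^1_{\rm b}$), so the Wigner and Bohmian second moments and the osmotic term all admit limits. The delicate point, and the main obstacle, is to rule out loss of kinetic energy to $|p|=\infty$, i.e.\ to establish that no second-moment mass concentrates at infinite momentum, so that $\iint_{\R^2}p^2 w=\lim_\e\iint_{\R^2}p^2 w^\e$ holds with equality rather than mere lower semicontinuity; this is precisely where the mono-kinetic hypothesis on $w$ must be used, since lower semicontinuity alone fails to chain the two moment inequalities in the required direction. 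Granting this equality, the displayed identity passes to the limit and, combined with the Jensen bound, forces both the osmotic term to vanish and $\iint_{\R^2}p^2\beta=\int_\R J^2/\rho\,dx$. The equality case of Jensen then shows that $\beta(t,x,\cdot)$ is mono-kinetic, whence $\beta(t,x,p)=\rho(t,x)\delta(p-u(t,x))=w(t,x,p)$, which is the claim. Alternatively, once the osmotic defect is seen to vanish one may argue that the defect $\mathcal F$ of \eqref{F} vanishes on $I$, so that $\beta$ and $w$ solve the same (classical) Liouville equation with data coinciding at $t=0$, and conclude by uniqueness of its measure-valued solution, given by transport along the Hamiltonian flow of $V$.
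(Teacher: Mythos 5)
Your fixed-$\e$ identities are correct and in fact mirror the paper's own ingredients: the Madelung splitting of the kinetic energy into a convective part $\int (J^\e)^2/\rho^\e\,dx$ plus an osmotic part $\e^2\int|\partial_x\sqrt{\rho^\e}|^2\,dx$ is exactly the decomposition the paper uses for $\rho^\e\nabla V_B^\e$, and your Jensen inequality with its equality case is precisely Lemma \ref{lemCS}. The difference is that you run the comparison \emph{globally} (integrated in $x$) at the level of total kinetic energy, while the paper runs it \emph{locally} at the level of the momentum balance law, and this difference is where your argument breaks. Your chain of inequalities requires the equality $\iint p^2\,w\,dx\,dp=\lim_\e\iint p^2 w^\e\,dx\,dp$, i.e.\ that no kinetic energy escapes to $|p|=\infty$ (nor to $|x|=\infty$, since the comparison is global). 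You assert that ``this is precisely where the mono-kinetic hypothesis on $w$ must be used'', but mono-kineticity of $w$ does not imply this equality: take any WKB state with mono-kinetic Wigner measure and add a perturbation of $L^2$-norm $\e$ oscillating at frequency $\e^{-2}$. This leaves $w$ unchanged and keeps the family normalized, $\e$-oscillatory and of uniformly bounded energy (so \eqref{masscon}, \eqref{assen}, \eqref{cond} all hold), yet the perturbation carries an order-one amount of kinetic energy which escapes to infinite momentum, so that $\lim_\e\iint p^2 w^\e > \iint p^2 w$ strictly. With a strict inequality there, your chain of moment inequalities cannot be closed; ``granting this equality'' is granting the heart of the matter.

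The paper closes exactly this hole by a different mechanism. Taking first $p$-moments of the $\e$-equation and of the limit equation produces two balance laws, \eqref{limbohm} and \eqref{limwig}, for the same $J$, in which every possible defect appears explicitly as a non-negative quantity: the osmotic defect $\mathcal A$ of \eqref{A}, the spread defect $\rho\mathcal B$ (Lemma \ref{lemCS} applied to $\beta$), the escape/concentration defect $\mathcal C$ of \eqref{C}, and the Wigner temperature tensor $\rho\mathcal T$ from \cite{GaMa}. In $d=1$, comparing the two balance laws --- this is where the hypotheses $d=1$ and $\beta_0=w_0$ enter, neither of which your argument ever uses, which is itself a warning sign --- yields the identity $\rho\mathcal T=\mathcal A+\rho\mathcal B+\mathcal C$ of \eqref{Teq}; mono-kineticity of $w$ gives $\mathcal T=0$, and non-negativity then forces $\mathcal A=\mathcal B=\mathcal C=0$ simultaneously. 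In other words, the vanishing of the escape defect is a \emph{conclusion} of the one-dimensional comparison identity, not an assumption, and your global energy argument has no counterpart to this step. (Your alternative ending via $\mathcal F=0$ and uniqueness for the Liouville equation inherits the same problem: by Lemma \ref{lemchar} the first moment of $\mathcal F$ is $-\diver(\mathcal A+\mathcal C)$, so vanishing of the osmotic term alone does not control $\mathcal F$; one still needs $\mathcal C=0$.)
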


In contrast to \cite{MPS}, where we established certain sufficient criteria for $\psi^\e(t)$, yielding mono-kinetic Wigner and/or limiting Bohmian measures, this result directly gives $\beta(t) = w(t)$ from the fact that 
the Wigner measure is mono-kinetic. It is well known, cf. \cite{GaMa, SMM} that the Wigner measure is generically mono-kinetic
before \emph{caustics}, i.e.~before the appearance of the first shock in the (field driven) Burgers equation
\be\label{burg}
\partial_t u + (u\cdot \nabla) u + \nabla V (x) = 0.
\ee
Note however, that there are situations in which $w$ is of the form given in Theorem \ref{thneu}  even \emph{after} caustics, see e.g. Example 4 given in \cite{GaMa} and 
Example 1 in \cite{SMM}. Both ot these examples furnish so-called \emph{point caustics}, i.e. caustics where all the rays of geometric optics cross at one point. The result given above therefore 
shows that in some situations the limiting Bohmian measure $\beta(t)$ can indeed give the correct classical limit (for all physical observables) \emph{even after caustics}. 
In addition, Theorem \ref{thneu} generalizes \cite[Proposition 6.1]{MPS}, at least in $d=1$ spatial dimensions.

\begin{remark}\label{WKBrem} Equation \eqref{burg} can be seen as the formal limit obtained from \eqref{sch} by means of WKB analysis. One thereby seeks solution to \eqref{sch} in the following form 
$$\psi^\e(t,x) = a^\e(t,x)e^{i S(t,x)/\eps}, 
$$
with (real-valued) $\e$-independent phase $S$ and (possibly complex-valued) amplitude $a^\e\sim a_0 + \e a_1 +\e^2 \dots$, in the sense of asymptotic expansions.
Plugging this ansatz into \eqref{sch} and neglecting terms of order $\mathcal O(\e^2)$ yields \eqref{burg} upon identifying $u  = \nabla S$, cf. \cite{SMM} for more details (see also Section 6 of \cite{MPS} where the 
connection between WKB analysis and Bohmian measures is briefly discussed). 
\end{remark}

Finally, we shall consider the particular case where the initial data $\psi^\e_0$ is a so-called \emph{semi-classical wave packet}, see Theorem \ref{th2} below. The classical limit of Bohmian trajectories in this particular situation 
has been recently studied in \cite{DuRo}. There it has been shown that the Bohmian trajectories $X^\e(t,x)$ converge (in some suitable topology) to $X(t)$, the classical particle trajectory induced by the Hamiltonian system 
\begin{equation}
\label{classical}
\left \{
\begin{aligned}
& \,  \dot X= P , \quad X(0) = x_0, \\
& \, \dot P= - \nabla V(X),\quad  P(0) = p_0.
\end{aligned}
\right. 
\end{equation}
One should note that the mathematical methods used in \cite{DuRo} are rather different from ours and that no convergence result for $P^\e(t)$ is given, 
except for $p_0=0$ (and, as a variant, for a class of time-averaged Bohmian momenta). 
In comparison to that, the use of $\beta^\e(t)$, together with classical Young measure theory, 
allows us to conclude the following result:
\begin{theorem} \label{th2}
Let $V\in C_{\rm b}^3(\R^d)$. In addition, assume that 
$\psi_0^\e$ is of the following form
$$
\psi^\e_0(x) = \e^{-d/4}  a\left( \frac{x-x_0}{\sqrt \e} \right) e^{i p_0 \cdot x /\e}, \quad x_0, p_0 \in \R,
$$
with given $\e$-independent amplitude $a\in \mathcal S(\R^d;\mathbb C)$, such that $|a(x)|^2 > 0$ a.e. on $\R^d$. 
\begin{enumerate} 
\item Then, the limiting Bohmian measure satisfies
$$
\beta(t,x,p) = w(t,x,p) =\| a\|_{L^2}^2 \, \delta(x-X(t)) \delta (p -P(t)),
$$
where $X(t), P(t)$ are the classical trajectories defined by \eqref{classical}.

\item Consider the following re-scaled Bohmian trajectories 
$$
Y^\e (t,y) = X^\e (t, x_0 + \sqrt{\e} y), \quad Z^\e (t,y) = P^\e (t, x_0 + \sqrt{\e} y),
$$
where $X^\e(t, x), P^\e(t,x)$ solve \eqref{bohm2} with initial data $(x_0 ,p_0)\in\R^{2d}$. Then
$$
Y^\e  \stackrel{\e\rightarrow 0_+ }{\longrightarrow} X, \quad Z^\e  \stackrel{\e\rightarrow 0_+ }{\longrightarrow} P,
$$
locally in measure on $\R_t \times \R^d_x$, i.e., for every $\delta>0$ and every 
Borel set $\Omega \subseteq \R_t \times \R^d_x$ with finite Lebesgue measure $\mathbb L$, it holds
\be\label{defcon2}
\lim_{\e\to 0} \mathbb L\big(\{(t,y)\in \Omega:\ | (Y^\e (t,y),Z^\e (t,y))-(X(t),P(t))| \ge \delta \} \big) = 0.
\ee
\end{enumerate}
\end{theorem}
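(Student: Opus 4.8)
The plan is to isolate a single analytic input from which both assertions follow, namely the \emph{integrated smallness of the Bohm force}:
\be\label{plan:key}
\int_0^T\!\!\int_{\R^d}\big|\nabla_x V^\e_B(t,x)\big|^2\,\rho^\e(t,x)\,dx\,dt\;\xrightarrow[\e\to0_+]{}\;0,\qquad\forall\,T>0.
\ee
Granting \eqref{plan:key}, I would first deduce part (1) directly from the Vlasov formulation \eqref{maineq}. By Theorem \ref{main}, $\beta^\e$ solves \eqref{maineq}, so the defect \eqref{F} is tested via $\langle\diver_p(\nabla_x V^\e_B\beta^\e),\varphi\rangle=-\iint\nabla_x V^\e_B\cdot(\nabla_p\varphi)(t,x,u^\e)\,\rho^\e\,dx\,dt$; Cauchy--Schwarz together with $\int\rho^\e\,dx=1$ bounds this by $\|\nabla_p\varphi\|_\infty\,T^{1/2}(\iint|\nabla_x V^\e_B|^2\rho^\e)^{1/2}$, so \eqref{plan:key} forces $\mathcal F\equiv0$. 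Hence any weak-$*$ limit $\beta(t)$ is a distributional solution of the \emph{linear} Liouville equation $\partial_t\beta+p\cdot\nabla_x\beta-\nabla_x V\cdot\nabla_p\beta=0$, whose unique measure solution (for Lipschitz $\nabla V$) is the push-forward of the initial datum along the classical flow \eqref{classical}. Since the initial Wigner/Bohmian measure of a semiclassical wave packet is the point mass $\|a\|_{L^2}^2\,\delta(x-x_0)\delta(p-p_0)$, this transport yields $\beta(t)=\|a\|_{L^2}^2\,\delta(x-X(t))\delta(p-P(t))$; the same transport statement for Wigner measures gives $\beta(t)=w(t)$.

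For part (2) I would exploit \emph{equivariance} $\beta^\e(t)=\Phi^\e_t\#\beta^\e_0$ to transfer everything to the rescaled label $y$. Writing $a=|a|e^{i\phi}$ one computes $u^\e_0(x)=p_0+\sqrt\e\,\nabla\phi((x-x_0)/\sqrt\e)$, so that after the substitution $x=x_0+\sqrt\e\,y$ (under which $\rho^\e_0\,dx=|a(y)|^2\,dy$) the rescaled trajectories start from $(x_0+\sqrt\e\,y,\;p_0+\sqrt\e\,\nabla\phi(y))\to(x_0,p_0)$ and obey
\begin{align*}
\dot Y^\e&=Z^\e,\\
\dot Z^\e&=-\nabla V(Y^\e)-\nabla_x V^\e_B(t,Y^\e).
\end{align*}
Setting $E^\e(t,y)=|Y^\e-X|^2+|Z^\e-P|^2$ and using $V\in C^3_{\mathrm b}$ to control $|\nabla V(Y^\e)-\nabla V(X)|\le\|\nabla^2V\|_\infty|Y^\e-X|$, a Gronwall estimate gives
\[
E^\e(t,y)\le e^{Ct}\Big(E^\e(0,y)+\int_0^t|\nabla_x V^\e_B(s,Y^\e(s,y))|^2\,ds\Big).
\]
Integrating in $(t,y)$ against $|a(y)|^2\,dy\,dt$, the initial term is $O(\e)$ because $a\in\mathcal S$, while equivariance turns the force term back into exactly the left-hand side of \eqref{plan:key}. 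Thus $E^\e\to0$ in $L^1(|a(y)|^2\,dy\,dt)$, which (as $|a|^2>0$ a.e.) is precisely the local in-measure convergence \eqref{defcon2}. Feeding this back, the pushforward identity $\langle\beta^\e(t),\varphi\rangle=\int\varphi(Y^\e,Z^\e)\,|a(y)|^2\,dy$ and dominated convergence re-prove part (1) independently.

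It remains to establish \eqref{plan:key}, which is the crux. Here I would invoke the semiclassical wave-packet propagation for \eqref{sch} (as in the analysis underlying \cite{DuRo} and the coherent-state calculus of Combescure--Robert and Hagedorn): $\psi^\e(t,x)=\e^{-d/4}e^{i\Theta^\e(t,x)/\e}\,b(t,(x-X(t))/\sqrt\e)+o(1)$, with a Schwartz profile $b(t,\cdot)$ evolving by the metaplectic flow of $\nabla^2V(X(t))$. Writing $R=\sqrt{\rho^\e}=\e^{-d/4}|b|$ and rescaling $\xi=(x-X(t))/\sqrt\e$, every $x$-derivative produces a factor $\e^{-1/2}$, so that $\nabla_x V^\e_B=-\tfrac{\e^2}{2}\nabla_x(\Delta R/R)=O(\sqrt\e)$ on the bulk of the packet and
\[
\int|\nabla_x V^\e_B(t,x)|^2\rho^\e(t,x)\,dx=\e\int\big|\nabla_\xi(\Delta_\xi|b|/|b|)\big|^2|b|^2\,d\xi+o(\e),
\]
giving the desired $O(\e)$ bound after integrating in $t$. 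The genuine difficulty — and the step I expect to be the main obstacle — is making this rigorous near the zeros of the profile, where $\Delta_\xi|b|/|b|$ is singular and the $\xi$-integral above is a priori divergent in low dimension. I would control this using $|a|^2>0$ a.e.\ together with the Schwartz regularity propagated from $\psi^\e_0\in H^3$, the fact that Bohmian trajectories cannot reach the nodal set $\{\rho^\e=0\}$, and — since only \emph{in-measure} convergence is claimed — the freedom to excise an arbitrarily small set of labels $y$ whose trajectories approach the nodes, on which the Gronwall bound is simply discarded. Quantifying this excision uniformly in $\e$ is the essential technical point.
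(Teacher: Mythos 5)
Your proposal hinges entirely on the unproven key input
\[
\int_0^T\!\!\int_{\R^d}\big|\nabla_x V^\e_B(t,x)\big|^2\,\rho^\e(t,x)\,dx\,dt\;\longrightarrow\;0\qquad(\e\to0_+),
\]
and this is where the argument genuinely breaks down --- not as a technical loose end, but at the level of finiteness. Passing to the rescaled unknown of \eqref{trafo}, one has $\nabla_x V^\e_B=-\tfrac{\sqrt\e}{2}\,\nabla_\xi\bigl(\Delta_\xi|v^\e|/|v^\e|\bigr)$ with $\xi=(x-X(t))/\sqrt\e$, so your quantity equals $\tfrac{\e}{4}\int\bigl|\nabla_\xi(\Delta_\xi|v^\e|/|v^\e|)\bigr|^2|v^\e|^2\,d\xi$, confirming your scaling heuristic; but near a non-degenerate zero of the complex-valued profile, where $|v^\e|\sim r$ with $r$ the distance to the (generically codimension-two) nodal set, the integrand behaves like $r^{-6}\cdot r^2=r^{-4}$, which is not locally integrable in any dimension $d\ge 2$. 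Nothing in the hypotheses excludes such zeros: $|a|^2>0$ only a.e., and even for nodeless $a$ the profile $v^\e(t)$ may develop zeros under \eqref{vepseq}. The $\rho^\e$-weighted $L^2$ norm of the Bohm force is controlled by no conserved quantity (the paper's own static bound, Corollary \ref{tec}, gives only $\mathcal O(1/\e^3)$ for the weaker, tested, weighted-$L^1$ quantity), and your proposed remedies --- trajectories avoid nodes, excision of a small set of labels $y$ --- do not supply the $\e$-uniform control that both your derivation of $\mathcal F\equiv0$ for part (1) and your Gronwall argument for part (2) require. Replacing the square by a weighted $L^1$ bound would tame the nodal singularity ($r^{-3}\cdot r^2=r^{-1}$ is integrable transversally), but then one needs uniform $H^3$-type bounds on $v^\e$, which are likewise unavailable.

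It is instructive to see how the paper's proof avoids the Bohm force altogether. For part (1) it works with the exact solution in the frame \eqref{trafo}, where $v^\e$ solves \eqref{vepseq}, and tests $\beta^\e$ against Lipschitz $\varphi$: the error is bounded by $C_\varphi\sqrt\e\,\|v^\e\|_{L^2}\bigl(\|\,|y|\,v^\e\|_{L^2}+\|\nabla v^\e\|_{L^2}\bigr)$, the crucial point being the pointwise cancellation $|v^\e|^2\,|\im(\nabla v^\e/v^\e)|\le|v^\e|\,|\nabla v^\e|$ --- the nodal singularity disappears under the weight $\rho^\e$ --- together with the soft, $\e$-uniform moment and energy estimates of Lemma \ref{a-priori}. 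For part (2) it never integrates the ODE: it combines the push-forward identity \eqref{formula} with Young-measure theory, obtaining $\beta(t,x,p)=\int_{\R^d}|a(y)|^2\,\omega_{t,y}(x,p)\,dy$, deducing from part (1) and $|a|^2>0$ a.e.\ that $\omega_{t,y}$ is a point mass, and then invoking \cite[Proposition 1]{GS}, which states that concentration of the Young measure at a point is equivalent to local in-measure convergence. Your Gronwall strategy, by contrast, demands exactly the quantitative trajectory-wise force control that is missing; to salvage your outline you would need to either prove a weighted bound of the above type (which I do not believe holds in general) or restructure the argument so that the force only ever appears multiplied by $\rho^\e$ in the cancellative combination the paper exploits.
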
 
Let us compare now our results with the corresponding ones in \cite{DuRo}: Besides the fact that we are able to infer 
convergence of the Bohmian momentum $P^\e(t)$, the topology we use is different from the one used there. Rephrased in our notation \cite{DuRo} proves that, for all $T > 0$ and $\gamma > 0$ there exists some $R < \infty$
such that, for all $\e$ small enough,
\be\label{them}
\mathbb P_{\rho_0^\e}(\{x_0\in \R^3 |\max_{t\in[0,T]}\vert X^\e(t,x_0)-X(t)\vert\leq R\sqrt\e\})>1-\gamma,
\ee
where $\mathbb P_{\rho^\e_0}$ is the probability measure with density $\rho^\e_0=\vert \psi^\e_0(x)\vert^2$. 
We see clearly that the comparison between the two results is not straightforward, as \eqref{them} 
measures the ``good" points 
and \eqref{defcon2} the ``bad" ones. 
Moreover, \eqref{them} is more precise for finite times, whereas \eqref{defcon2} doesn't
require a-priori bounds on the time dependence and additionally implies almost everywhere convergence of sub-sequences, see Remark \ref{rem: last}. 
It would certainly be interesting to study the link between the two approaches more precisely.
\begin{remark}
In view of Theorem \ref{thneu} and Theorem \ref{th2} one might guess that $w=\beta$ only if both are mono-kinetic phase space distributions. 
This is consistent with the examples given in \cite{MPS} but still wrong in general. 
Indeed, in the appendix of the present work, we shall construct a family of wave functions $\psi^\e$ for which $w = \beta$ is absolutely 
continuous with respect to the Lebesgue measure on $\R^d_p$. 
\end{remark}

The paper is now organized as follows: In the upcoming section we collect some basic mathematical estimates needed throughout this work. 
In Section \ref{sec: weak} we establish the fact that $\beta^\e(t)$ indeed furnishes a distributional solution of \eqref{maineq}. 
In Section \ref{sec: def} the defect $\mathcal F$ will be partially characterized, yielding the proof of Theorem \ref{thneu}. In Section \ref{sec: coh}, the particular case of semi-classical wave packets will be studied including the 
proof of Theorem \ref{th2}. 
Finally, in Appendix A we shall present and example in which $w=\beta$ but not mono-kinetic.

\section{Static estimates}\label{sec: static}

In order to establish the weak formulation of \eqref{maineq}, we shall heavily rely on the following static, i.e. time-independent, estimate.

\begin{proposition}\label{te}
Let $\psi\in C^2(\R^d)$, then it holds:
$$
\left\vert \diver\left(\nabla\vert\psi\vert\otimes\nabla\vert\psi\vert\right)\right\vert\leq 
d \left(\sup_{\ell jk}\vert\partial_\ell\partial_j\psi\partial_k\psi\vert+
\frac 12\left\vert\im\left(\frac{\nabla\psi}{{\psi}}\right)\right\vert
\sup_{\ell j}\vert\partial_\ell\psi\partial_j\psi\vert\right),
$$
with $j, \ell, k =1, \dots, d$.
\end{proposition}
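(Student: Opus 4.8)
The plan is to prove the bound pointwise on the open set $U:=\{x\in\R^d:\psi(x)\neq 0\}$, which is where the left-hand side is genuinely defined: since $\psi\in C^2$, the modulus $|\psi|=\sqrt{\psi\overline\psi}$ is $C^2$ on $U$ (the radicand is $C^2$ and strictly positive there), whereas at a zero of $\psi$ the modulus is generically only Lipschitz and $\nabla|\psi|$ need not exist. On $U$ I would use the local polar representation $\psi=|\psi|\,e^{i\phi}$ with a $C^2$ branch of the phase $\phi$, and write $q:=|\psi|$.

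The key is a set of pointwise identities expressing all derivatives of $q$ through derivatives of $\psi$ \emph{without dividing by} $q$ — this is precisely what neutralizes the apparent $1/|\psi|$ singularity. Differentiating the polar form gives $e^{-i\phi}\partial_j\psi=\partial_j q+i\,q\,\partial_j\phi$, hence
\[
\partial_j q=\re\!\big(e^{-i\phi}\partial_j\psi\big),\qquad q\,\partial_j\phi=\im\!\big(e^{-i\phi}\partial_j\psi\big),\qquad \partial_j\phi=\im\!\Big(\tfrac{\partial_j\psi}{\psi}\Big).
\]
In particular $(\partial_j q)^2+(q\,\partial_j\phi)^2=|\partial_j\psi|^2$, so that $|\partial_j q|\le|\partial_j\psi|$ and $|q\,\partial_j\phi|\le|\partial_j\psi|$. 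Differentiating once more yields the second-order identity
\[
\partial_j\partial_k q=\re\!\big(e^{-i\phi}\partial_j\partial_k\psi\big)+q\,\partial_j\phi\,\partial_k\phi .
\]

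With these in hand I would expand the divergence componentwise,
\[
\big[\diver(\nabla q\otimes\nabla q)\big]_k=\sum_{j=1}^d\partial_j\big(\partial_j q\,\partial_k q\big)=\sum_{j=1}^d\big(\partial_j\partial_j q\,\partial_k q+\partial_j q\,\partial_j\partial_k q\big),
\]
and substitute the second-order identity into each second derivative of $q$. Every summand then splits into a ``Hessian'' piece, of the form $\re(e^{-i\phi}\partial_a\partial_b\psi)\,\partial_k q$ or $\partial_j q\,\re(e^{-i\phi}\partial_j\partial_k\psi)$, and a ``phase'' piece, of the form $q\,\partial_a\phi\,\partial_b\phi\,\partial_c q$. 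Using $|\partial_k q|\le|\partial_k\psi|$, each Hessian piece is bounded by $\sup_{\ell j k}|\partial_\ell\partial_j\psi\,\partial_k\psi|$; using $\partial_a\phi=\im(\partial_a\psi/\psi)$ together with $|q\,\partial_b\phi|\le|\partial_b\psi|$ and $|\partial_c q|\le|\partial_c\psi|$, each phase piece is bounded by $|\im(\nabla\psi/\psi)|\,\sup_{\ell j}|\partial_\ell\psi\,\partial_j\psi|$. Summing the $d$ contributions of the $j$-sum and collecting the two groups, a careful count (rather than a crude term-by-term triangle inequality), exploiting the relation $(\partial_k q)^2+(q\,\partial_k\phi)^2=|\partial_k\psi|^2$ to pair the phase contributions, is what is needed to recover exactly the factor $d$ in front of the Hessian term and the factor $\tfrac12$ in front of the velocity term.

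I expect the conceptual obstacle to be exactly the non-smoothness of $|\psi|$ at the zeros of $\psi$: a naive computation produces negative powers of $q$ and is meaningless where $\psi$ vanishes. The polar identities above are the device that removes these denominators, so that the final estimate involves only the first and second derivatives of $\psi$ and the velocity field $\im(\nabla\psi/\psi)$. The remaining difficulty is purely bookkeeping — organizing the $2d$ elementary terms of the $k$-th component so that the sharp constants $d$ and $\tfrac12$ emerge rather than the cruder constants one would get by estimating each term in isolation.
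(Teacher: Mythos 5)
Your route is essentially the paper's own, written multiplicatively: the paper expresses $\partial_\ell|\psi|\,\partial_j|\psi|$ as $\re\left(\partial_\ell\psi\,\overline{\partial_j\psi}+\tfrac{\overline\psi}{\psi}\,\partial_\ell\psi\,\partial_j\psi\right)$ (up to a factor $\tfrac12$ which it drops) and differentiates this product, using $\partial_k\left(\tfrac{\overline\psi}{\psi}\right)=-2i\,\im\left(\tfrac{\partial_k\psi}{\psi}\right)\tfrac{\overline\psi}{\psi}$ to generate the velocity term. Your unimodular factor $e^{-i\phi}$ is a square root of the paper's $\overline\psi/\psi=e^{-2i\phi}$, and your identities $\partial_j q=\re(e^{-i\phi}\partial_j\psi)$, $q\,\partial_j\phi=\im(e^{-i\phi}\partial_j\psi)$, $\partial_j\partial_k q=\re(e^{-i\phi}\partial_j\partial_k\psi)+q\,\partial_j\phi\,\partial_k\phi$ are correct and serve exactly the same purpose: no negative power of $q$ ever appears, only first and second derivatives of $\psi$ and the velocity $\im(\nabla\psi/\psi)$. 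Working on $\{\psi\neq0\}$ is likewise implicit in the paper, since the right-hand side of the Proposition is only defined there.

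The one step that would fail is precisely the one you defer: no ``careful count'' can recover the constants $d$ and $\tfrac12$, because the inequality with those constants is false as stated. Take $d=1$ and $\psi(x)=1+x^2$, which is real, zero-free and $C^2$; then $\diver\left(\nabla|\psi|\otimes\nabla|\psi|\right)=2\psi'\psi''=8x$, while the right-hand side equals $|\psi''\psi'|=4x$ (the velocity term vanishes identically), so the claimed bound fails by a factor of $2$ at every $x>0$. Indeed, in your expansion the two Hessian pieces of the $j$-th summand of the $k$-th component are $\re(e^{-i\phi}\partial_j^2\psi)\,\partial_k q$ and $\partial_j q\,\re(e^{-i\phi}\partial_j\partial_k\psi)$, which for $j=k$ are \emph{identical} -- there is nothing to pair or cancel -- so the honest outcome of your computation is the bound with $2d$ on the Hessian part and with $2\sum_j|\im(\partial_j\psi/\psi)|$ (hence $2\sqrt d\,|\im(\nabla\psi/\psi)|$) on the velocity part. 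This is not a defect relative to the paper: its proof is equally loose with constants (it omits the $\tfrac12$ in the product identity, records the derivative of $\overline\psi/\psi$ with a factor $\tfrac12\,\im$ instead of $-2i\,\im$, and concludes by ``summing up all terms yields the assertion''), and nothing downstream uses the numerical constants -- Corollary \ref{tec} and Lemma \ref{ppf1} need only the structure of the estimate and its $\e$-scaling. So keep your derivation, but state the conclusion with the constants it actually produces rather than promising a sharpened count that cannot be made to work.
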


\begin{proof}

We denote $\partial_j:=\partial_{x_j}$ and first compute
\[
\partial_j|\psi|=\frac{\overline\psi \partial_j\psi+\psi\overline{\partial_j\psi}}{2(\psi\overline\psi)^{1/2}},
\]
which yields 
\[
\partial_\ell\vert\psi\vert \, \partial_j\vert\psi\vert =\re \left(\partial_\ell\psi\overline{\partial_j\psi}+\frac{\overline\psi}\psi
\partial_\ell\psi\partial_j\psi\right),
\]
Using this we can write
\be\label{form}
\diver\left(\nabla\vert\psi\vert\otimes\nabla\vert\psi\vert\right)=
  \sum_{k=1}^d \re \ \partial_k \left(\partial_\ell\psi\overline{\partial_j\psi}+\frac{\overline\psi}\psi
\partial_\ell\psi\partial_j\psi\right),
 \ee
 where each term in this series will be estimated separately (and in the same way). 
 In order to handle terms in which the partial derivative $\partial_k$ acts on $\overline\psi/\psi$ we note that
\[
\partial_k\left(\frac{\overline\psi}\psi\right)=\frac{\partial_k\overline\psi}\psi-\frac{\partial_k\psi\overline\psi}{\psi^2}
=\left(\frac{\partial_k\overline\psi}{\overline\psi}-\frac{\partial_k\psi}\psi\right)\frac{\overline\psi}\psi,
\]
and we henceforth obtain
\[
\partial_\ell\psi\partial_j\psi\partial_k\left(\frac{\overline\psi}\psi\right)=\frac1
{2}\im\left(\frac{\partial_k\psi}{{\psi}}\right)\partial_\ell\psi\partial_j\psi\frac{\overline\psi}\psi.
\]
Using this on the r.h.s. of \eqref{form} and summing up all terms yields the assertion of the lemma.

\end{proof}

In the upcoming analysis, we shall use the established estimate in the following form, where we denote by $\| f \|_{\dot H^1}:=  \| \nabla f \|_{L^2} $ the usual $\dot H^1(\R^d)$ semi-norm: 

\begin{corollary}\label{tec}
Fix $\e>0$. 
Then for any $\psi^\e \in H^2_{\rm loc}(\R^d)$ and for any test function $\varphi\in\mathcal D(\bbR^d_x\times \bbR^d_p)$, we have
\be\label{semi2}
\int_{\R^d} \left\vert \diver\left(\nabla\sqrt{\rho^\e}\otimes\nabla\sqrt{\rho^\e}\right)\varphi\left(x,\frac {J^\e}{ \rho^\e}\right)\right\vert dx\leq M^\e < +\infty,
\ee
where $\rho^\e, J^\e$ are defined in \eqref{densities}. Explicitly, we find
\begin{align*} 
M^\e\leq & \ \frac{d}{\e} 
\Vert\psi^\e \Vert^2_{\dot H^1(\Omega)}\sup_{\xi\in\R^d}\int_{\R^d} \vert \xi\varphi(x,\xi)\vert dx \\
& \ +\e d \Vert\psi^\e \Vert_{\dot H^1(\Omega)}\Vert \nabla \psi^\e \Vert_{\dot H^1(\Omega)}\sup_{\xi\in\R^d}\int_{\R^d}\vert\varphi(x,\xi)\vert dx,
\end{align*}
where $\Omega\subset \R_x^d$ denotes any compact set containing the support of $\varphi(.,p)$  for all $p\in\R^d$.

\end{corollary}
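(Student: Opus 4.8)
The plan is to deduce the corollary directly from the pointwise bound of Proposition \ref{te}, applied with $\psi = \psi^\e$ (so that $\sqrt{\rho^\e} = |\psi^\e|$), after rewriting the quantity $\im(\nabla\psi^\e/\psi^\e)$ in terms of the Bohmian velocity. First I would observe that, by the very definitions in \eqref{densities},
\[
\im\left(\frac{\nabla\psi^\e}{\psi^\e}\right) = \frac1\e\,\frac{J^\e}{\rho^\e},
\]
which is precisely (up to the factor $1/\e$) the argument $\xi = J^\e/\rho^\e$ at which the test function $\varphi$ is evaluated. Substituting this into Proposition \ref{te} gives, pointwise a.e.,
\[
\left|\diver\left(\nabla\sqrt{\rho^\e}\otimes\nabla\sqrt{\rho^\e}\right)\right| \le d\,\sup_{\ell jk}\left|\partial_\ell\partial_j\psi^\e\,\partial_k\psi^\e\right| + \frac{d}{2\e}\left|\frac{J^\e}{\rho^\e}\right|\,\sup_{\ell j}\left|\partial_\ell\psi^\e\,\partial_j\psi^\e\right|,
\]
so that multiplying by $|\varphi(x,J^\e/\rho^\e)|$ and integrating over $\R^d$ splits $M^\e$ into two contributions to be treated separately.

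For the second (singular velocity) contribution I would bound $\sup_{\ell j}|\partial_\ell\psi^\e\partial_j\psi^\e| \le |\nabla\psi^\e|^2$ and then absorb the velocity weight $|J^\e/\rho^\e|$ into the test function. The decisive point here is that $\varphi$ has compact support in the momentum variable, so that although $\xi = J^\e/\rho^\e$ is unbounded (and even ill-defined on $\{\rho^\e=0\}$), the product $|\xi\,\varphi(x,\xi)|$ is bounded and its $x$-integral is controlled by $\sup_{\xi\in\R^d}\int_{\R^d}|\xi\varphi(x,\xi)|\,dx$. Since $\int_{\R^d}|\nabla\psi^\e|^2\,dx = \|\psi^\e\|^2_{\dot H^1(\Omega)}$ once restricted to the $x$-support $\Omega$ of $\varphi$, this yields the first term of the asserted bound. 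For the first (Hessian) contribution I would bound $\sup_{\ell jk}|\partial_\ell\partial_j\psi^\e\,\partial_k\psi^\e| \le \big(\sup_{\ell j}|\partial_\ell\partial_j\psi^\e|\big)\,|\nabla\psi^\e|$ and apply the Cauchy–Schwarz inequality in $x$ to separate the $L^2$ norm of the Hessian, namely $\|\nabla\psi^\e\|_{\dot H^1(\Omega)}$, from $\|\nabla\psi^\e\|_{L^2(\Omega)} = \|\psi^\e\|_{\dot H^1(\Omega)}$, the remaining test-function factor producing $\sup_{\xi}\int_{\R^d}|\varphi(x,\xi)|\,dx$; the precise constants and powers of $\e$ then follow by direct bookkeeping.

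The main obstacle I anticipate is twofold, and both points concern making the above rigorous rather than the algebra. First, Proposition \ref{te} is stated for $\psi\in C^2$, whereas here $\psi^\e$ is only $H^2_{\rm loc}$; I would therefore argue by regularization, applying the pointwise estimate to smooth approximants $\psi^\e_n\to\psi^\e$ in $H^2_{\rm loc}(\Omega)$ and passing to the limit in the integrated inequality. The delicate step in this limit is the term involving $\varphi(x, J^\e_n/\rho^\e_n)$, since the map $\psi\mapsto J^\e/\rho^\e$ is not continuous where the density degenerates. Second, and relatedly, one must control the behaviour on $\{\rho^\e = 0\} = \{\psi^\e = 0\}$, where the velocity is undefined: here one uses that $\nabla\psi^\e = 0$ a.e. on this set (so that the Hessian contribution vanishes there) together with the boundedness of $|\xi\varphi(x,\xi)|$ to ensure that no mass is lost, after which dominated convergence closes the argument. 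Extracting exactly the mixed $L^\infty_\xi(L^1_x)$-type quantities $\sup_\xi\int_x|\xi\varphi|\,dx$ and $\sup_\xi\int_x|\varphi|\,dx$, rather than the cruder uniform bounds, is the part requiring the most care, and is precisely where the compact support of $\varphi$ in both variables is exploited.
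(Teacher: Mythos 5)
Your proposal follows exactly the route the paper itself takes --- its entire proof reads ``follows directly from the estimate given in Proposition \ref{te} and a density argument in $H^2_{\rm loc}(\R^d)$'' --- and your key substitution $\im(\nabla\psi^\e/\psi^\e)=\frac1\e J^\e/\rho^\e$, the splitting into a Hessian term and a velocity term, and the regularization argument (including the care taken on the vanishing set $\{\rho^\e=0\}$, where $\nabla\psi^\e=0$ a.e.) are precisely the details the paper suppresses. For the substantive content of the corollary, namely the finiteness of $M^\e$ for fixed $\e$ (which is all that is used later, in Lemma \ref{ppf1} and Theorem \ref{main}), your argument is correct and in fact more careful than the published one.

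There is, however, a genuine non sequitur at the point where you claim to recover the \emph{explicit} expression for $M^\e$. From the velocity term you arrive at $\frac{d}{2\e}\int_\Omega |\nabla\psi^\e(x)|^2\,\bigl|\xi(x)\varphi\bigl(x,\xi(x)\bigr)\bigr|\,dx$ with $\xi(x)=J^\e/\rho^\e$; what this is controlled by is $\frac{d}{2\e}\,\Vert\psi^\e\Vert^2_{\dot H^1(\Omega)}\,\sup_{x,\xi}|\xi\varphi(x,\xi)|$, i.e.\ the supremum of $|\xi\varphi|$ \emph{in both variables}, not by $\frac{d}{\e}\,\Vert\psi^\e\Vert^2_{\dot H^1(\Omega)}\,\sup_{\xi}\int_{\R^d}|\xi\varphi(x,\xi)|\,dx$ as stated: the weight $|\nabla\psi^\e|^2$ may concentrate on a tiny set where $|\xi\varphi(\cdot,\xi)|$ is maximal, in which case the left-hand side is of size $\sup_{x,\xi}|\xi\varphi|\cdot\Vert\psi^\e\Vert^2_{\dot H^1(\Omega)}$ while $\sup_\xi\int_x|\xi\varphi|\,dx$ is arbitrarily small, so the two quantities are not comparable. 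The same issue occurs in the Hessian term (Cauchy--Schwarz yields $\sup_{x,\xi}|\varphi|$, not $\sup_\xi\int_x|\varphi|\,dx$), and moreover your derivation produces the factor $d$ there, not the smaller factor $\e d$ appearing in the statement. So the sentence ``this yields the first term of the asserted bound'' is not justified by the steps you describe. In fairness, this mismatch is inherited from the paper's own formulation --- its one-line proof cannot produce the mixed $\sup_\xi\int_x$ quantities either, and only the order-of-magnitude consequence $M^\e=\mathcal O(\e^{-3})$ is ever used --- but a complete write-up should either state the bound with the $\sup_{x,\xi}$ norms that the argument actually delivers, or explain how (if at all) the $L^\infty_\xi L^1_x$ expression could be obtained, which your current reasoning does not do.
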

\begin{proof} The proof follows directly from the estimate given in Proposition \ref{te} and a density argument in $H^2_{\rm loc}(\R^d)$. \end{proof}

In order to understand how $M^\e$ behaves with respect to $\e$ we first note that due to the semi-classical scaling of the equation \eqref{sch} 
$\| \psi^\e(t)\| _{\dot H^1}$ is not uniformly bounded as $\e\to 0_+$. In fact $\psi^\e(t)$ is $\e$-oscillatory  for all $t\in \R$ and thus each derivative scales like $1/\e$. 
Invoking the conservation laws of mass and energy, we have to use the re-scaled semi-norms $\| f \|_{\dot H_\e^1}:=  \| \e \nabla f \|_{L^2} $ in order to write $M^\e$ in terms of uniformly bounded (w.r.t. $\e$) expressions.
Doing so,  we find that $M^\e = \mathcal O(1/\e^3)$ as $\e \to 0_+$.
\begin{remark}
Note, however, that if $\psi^\e$ is of WKB type, i.e. $\psi^\e(x)=a(x) e^{iS(x)/\varepsilon}$ with real-valued phase $S(x)\in \R$ 
and $\e$-independent amplitude $a$,  then the left hand side in \eqref{semi2} 
obviously is bounded as $\e\to 0$, whereas the right hand side blows up. In addition, it is easy to check that if one takes  a superposition of
two WKB states (such that $\nabla S_1 \not = \nabla S_2$), the bound \eqref{semi2} 
cannot be improved in general.
\end{remark}

\section{Bohmian measures as distributional solutions of \eqref{maineq}} \label{sec: weak}

\subsection{Mathematical preliminaries}

Let us first note that the assumption $V\in C_{\rm b}^1(\R^d;\R)$ is sufficient to ensure that, for each $\e>0$, the Hamiltonian operator
\begin{equation*}\label{ham}
H^\e= -\frac{\e^2}{2}\Delta + V(x),
\end{equation*}
is 
self-adjoint on $D(H^\e)=H^2(\R^d)\subset L^2(\R^d)$. It therefore generates a unitary (strongly continuous) group $U^\e(t) = e^{-it H^\e /\e}$ on $L^2(\R^d)$, which ensures 
the global existence of a unique solution $\psi^\e(t) = U^\e(t) \psi_0$ of the Schr\"odinger equation \eqref{sch}, such that 
$$ \| \psi^\e(t, \cdot) \|_{L^2} = \| \psi^\e_0 \|_{L^2}.$$
Moreover, since $\psi^\e_0\in H^3 \subset D(H^\e)$, standard semi-group theory \cite{Pa} implies $\psi^\e(t) \in  D(H^\e)= H^2(\R^d)$, for all $t\in \R$. Clearly, this also yields that $\rho^\e(t)\in L^1(\R^d;\R)$ as well as 
$J^\e(t) \in L^1(\R^d;\R^d)$, for all $t\in \R$ and 
$$
E^\e(t) = E^\e(0)< +\infty.
$$
In addition, since $H^\e$ and $U^\e(t)$ commute, and we obtain that $$\| (H^\e)^{n/2} \psi^\e (t) \|_{L^2(\R^d)} = \| (H^\e)^{n/2} \psi^\e_0 \|_{L^2(\R^d)}.$$ Since $V\in L^\infty(\R^d)$ the latter is equivalent to the 
$n$-th Sobolev norm $$\|f \| _{H^n}:= \| (1 + |\xi |^{n/2}) \widehat f \|_{L^2},$$ and we immediately infer the following result:
\begin{lemma}\label{h3}
Under the assumptions of Theorem \ref{main}, $\psi^\e(t)\in H^3(\R^d)$ for all $t\in\R$.
\end{lemma}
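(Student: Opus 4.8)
The plan is to reduce the claim to the preservation of \emph{lower}-order regularity under the unitary group, exploiting that $U^\e(t)$ commutes with $H^\e$ and preserves the form domain $H^1(\R^d)=D((H^\e)^{1/2})$. The starting point is the observation that, for $V\in C^1_{\rm b}(\R^d;\R)$, membership in $H^3$ admits a clean characterization through $H^\e$: a function $f$ lies in $H^3(\R^d)$ if and only if $f\in H^1(\R^d)$ and $H^\e f\in H^1(\R^d)$. The nontrivial implication is the backward one; I would prove it by writing $\Delta f = \tfrac{2}{\e^2}(Vf-H^\e f)$, noting first that the right-hand side lies in $L^2$ (so $f\in H^2$ by elliptic regularity) and then that $Vf\in H^1$, since $\nabla(Vf)=(\nabla V)f+V\nabla f\in L^2$ by $V\in C^1_{\rm b}$; hence $\Delta f\in H^1$, i.e. $f\in H^3$. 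The same computation run forward shows $f\in H^3\Rightarrow H^\e f\in H^1$.

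With this characterization in hand I would carry out the following steps. First, $\psi^\e(t)\in H^1(\R^d)$ for all $t$: since $V\in L^\infty$ the form $q(f)=\tfrac{\e^2}{2}\norm{\nabla f}_{L^2}^2+\int_{\R^d}V|f|^2\,dx$ has domain $H^1$ with the $(H^\e+c)^{1/2}$-norm (for a shift $c>\norm{V}_{L^\infty}$) equivalent to $\norm{\cdot}_{H^1}$, and the unitary group preserves $D((H^\e)^{1/2})=H^1$ while conserving this norm. Second, $H^\e\psi^\e_0\in H^1$: this is the forward implication of the characterization applied to $\psi^\e_0\in H^3$. Third, because $H^\e$ commutes with $U^\e(t)$ on $D(H^\e)\supset H^3$, one has $H^\e\psi^\e(t)=U^\e(t)\,H^\e\psi^\e_0$, and the first step, applied now to the $H^1$-datum $H^\e\psi^\e_0$, gives $H^\e\psi^\e(t)\in H^1$ for all $t$, uniformly in $t$. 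Fourth, combining $\psi^\e(t)\in H^1$ with $H^\e\psi^\e(t)\in H^1$ and invoking the backward implication yields $\psi^\e(t)\in H^3$, as claimed. Equivalently, one may phrase the whole argument through the conserved quantity $\norm{(H^\e)^{3/2}\psi^\e(t)}_{L^2}=\norm{(H^\e)^{3/2}\psi^\e_0}_{L^2}$ together with the identification $D((H^\e)^{3/2})=H^3$.

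I expect the main obstacle to be precisely the handling of the \emph{odd} order $n=3$, equivalently the half-integer power $(H^\e)^{3/2}$, which is where the hypothesis $V\in C^1_{\rm b}$ is genuinely needed rather than the merely $L^\infty$ bound that suffices for the even-order (operator-domain and form-domain) statements. Concretely, propagating one more derivative amounts to controlling the commutator $[H^\e,\partial_{x_k}]=(\partial_{x_k}V)$, which is a bounded multiplication operator exactly when $V\in C^1_{\rm b}$; for $V$ merely in $L^\infty$ the term $\nabla(V\psi^\e)$ need not lie in $L^2$, the identification $D((H^\e)^{3/2})=H^3$ fails, and $H^3$ regularity is in general not propagated. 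Once this point is secured, all the remaining steps are soft consequences of self-adjointness, the spectral theorem, and elliptic regularity, and the bounds are uniform in $t$ because the relevant $(H^\e)^{1/2}$- and $(H^\e)^{3/2}$-norms are conserved by the flow.
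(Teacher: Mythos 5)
Your proof is correct, and it rests on the same two pillars as the paper's own argument: the commutation of $H^\e$ with the propagator $U^\e(t)$, and the identification of domains of powers of $H^\e$ with Sobolev spaces. The paper's proof is essentially the ``equivalent phrasing'' you mention at the end: it conserves $\|(H^\e)^{n/2}\psi^\e(t)\|_{L^2}=\|(H^\e)^{n/2}\psi^\e_0\|_{L^2}$ and asserts that for $n=3$ this quantity is equivalent to the $H^3$ norm, attributing the equivalence to $V\in L^\infty$ alone. Your unfolding at integer orders (characterizing $f\in H^3$ by $f\in H^1$ together with $H^\e f\in H^1$, then propagating $H^1$ regularity separately for $\psi^\e(t)$ and for $H^\e\psi^\e(t)=U^\e(t)H^\e\psi^\e_0$) buys two things. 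First, it avoids fractional powers of $H^\e$ entirely, so there is no need to fix a shift making the operator nonnegative or to discuss $D((H^\e)^{3/2})$. Second, it makes precise a point the paper glosses over: the equivalence at the odd order $n=3$ genuinely requires $\nabla V\in L^\infty$, i.e. the hypothesis $V\in C^1_{\rm b}$ of Theorem \ref{main}, since one must place $Vf$ in $H^1$ via $\nabla(Vf)=(\nabla V)f+V\nabla f$; mere boundedness of $V$ suffices only for the even orders $n=0,2$. So your argument is a more careful, self-contained version of the same approach, and your diagnosis of exactly where $C^1_{\rm b}$ enters is correct, whereas the paper's ``since $V\in L^\infty$'' is, strictly speaking, too weak a justification for the $n=3$ case it needs.
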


With this result in hand, we are sure to be able to apply the estimates established in Section \ref{sec: static}.

\subsection{Weak formulation of \eqref{maineq}}

In order to make sense of $\beta^\e(t)$ as a weak solution of \eqref{maineq}, the main problem is to understand the weak formulation of $\nabla_x V_B^\e\cdot  \nabla_p \beta^\e=\diver_p(V^\e_B \nabla_x \beta^\e)$. 
To this end, consider a class of compactly supported test functions 
$\vp(t, x,p)=\chi(t,x)\sigma(p)$ with $\chi \in C_{\rm c}^\infty(\bbR_t \times\bbR^d_x)$, $\sigma\in C_{\rm c}^\infty(\bbR^d_p)$ and compute
\begin{align*}
&\int_0^\infty\iint_{\bbR^{2d}}\nabla_x V^\e_B\cdot \nabla_p  \vp(t,x,p) d\beta^\e(t,x,p) dt = \\
& = \int_0^\infty\int_{\bbR^d}\chi(t, x)
\nabla_x V^\e_B(t,x) \cdot \nabla\sigma(u^\e(t,x)) \rho^\e(t, x) \, dx \, dt.
\end{align*}
since, by definition, $\beta^\e(t,x,p) = \rho^\e(t,x) \delta(p-u^\e(t,x))$ where denote $u^\e:= \frac{J^\e}{\rho^\e}$, i.e. the quantum mechanical velocity field. To this end, we recall that $u^\e \in L^2 (\R^d, \rho^\e dx)$.
The following lemma then shows that 
this weak formulation indeed makes sense.

\begin{lemma}\label{ppf1}
Let $\e>0$. For $\sigma\in C_{\rm c}^\infty(\bbR^d_p)$ and $\chi \in C_{\rm c}^\infty(\bbR_t \times\bbR^d_x)$ we have
\[
\int_0^\infty\int_{\bbR^d} | \chi(x,t) | \, |
\nabla\sigma(u^\e(x,t))| \, |\nabla_x V^\e_B(x,t)| \rho^\e(t, x)\,dx\, dt <+ \infty.
\]
\end{lemma}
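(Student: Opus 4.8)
The plan is to split the (a priori singular) quantity $\rho^\e\nabla_x V^\e_B$ into a part that is regular across the nodal set $\{\psi^\e=0\}$ and a part to which the static estimate of Section~\ref{sec: static} applies. Writing $\sqrt{\rho^\e}=|\psi^\e|$, a direct computation (differentiating $V^\e_B=-\tfrac{\e^2}{2}\,\Delta\sqrt{\rho^\e}/\sqrt{\rho^\e}$ and using the elementary identity $\rho\,\nabla\big(\Delta\sqrt\rho/\sqrt\rho\big)=\tfrac12\nabla\Delta\rho-2\,\diver\big(\nabla\sqrt\rho\otimes\nabla\sqrt\rho\big)$) gives, pointwise on $\{\rho^\e(t,\cdot)>0\}$,
\[
\rho^\e\,\nabla_x V^\e_B=-\frac{\e^2}{4}\,\nabla_x\Delta_x\rho^\e+\e^2\,\diver_x\!\big(\nabla\sqrt{\rho^\e}\otimes\nabla\sqrt{\rho^\e}\big).
\]
The point of this rewriting is that the first term involves only $\rho^\e=|\psi^\e|^2$, which carries no square root and is thus controlled by the $H^3$-regularity of $\psi^\e$, while the sole genuinely singular contribution is confined to the second term, i.e.\ exactly the object bounded in Proposition~\ref{te}.

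I would then remove the harmless factors. Since $\chi\in C^\infty_{\rm c}(\R_t\times\R^d_x)$ and $\sigma\in C^\infty_{\rm c}(\R^d_p)$, there are $T<\infty$, a compact $K\subset\R^d$ and $R<\infty$ such that the integrand is supported in $t\in[0,T]$, $x\in K$ and $\{|u^\e|\le R\}$, and $|\nabla\sigma(u^\e)|\le\|\nabla\sigma\|_\infty$. Choosing $\eta\in C^\infty_{\rm c}(\R^d_x)$ with $\eta\ge\mathbf 1_K$ and $\Psi\in C^\infty_{\rm c}(\R^d_p)$ with $\Psi\ge|\nabla\sigma|$, the test function $\varphi(x,p):=\|\chi\|_\infty\,\eta(x)\Psi(p)\in\mathcal D(\R^d_x\times\R^d_p)$ dominates $|\chi(t,x)|\,|\nabla\sigma(u^\e)|$ for every $t$. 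Using the weight $\rho^\e$ (so that only $\{\rho^\e>0\}$ is charged and $u^\e$ is well defined there) together with the identity above, it then suffices to bound
\[
\int_0^T\!\!\int_{\R^d}\Big(\tfrac{\e^2}{4}\,\|\chi\|_\infty\|\nabla\sigma\|_\infty\,\mathbf 1_K\,|\nabla_x\Delta_x\rho^\e|+\e^2\,\varphi(x,u^\e)\,\big|\diver_x(\nabla\sqrt{\rho^\e}\otimes\nabla\sqrt{\rho^\e})\big|\Big)\,dx\,dt.
\]

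For the first term I would use Lemma~\ref{h3}: $\psi^\e(t)\in H^3(\R^d)$ with $\|\psi^\e(t)\|_{H^3}$ bounded uniformly on $[0,T]$ (since $\|(H^\e)^{3/2}\psi^\e(t)\|_{L^2}$ is conserved), whence, by Leibniz' rule and Cauchy--Schwarz, $\|\nabla\Delta\rho^\e(t)\|_{L^1(K)}$ is finite and bounded uniformly in $t\in[0,T]$, so the $t$-integral converges. For the second term I would apply Corollary~\ref{tec} with the test function $\varphi$, which yields, for each fixed $t$, $\int_{\R^d}\varphi(x,u^\e)\,|\diver(\nabla\sqrt{\rho^\e}\otimes\nabla\sqrt{\rho^\e})|\,dx\le M^\e(t)<+\infty$; since $M^\e(t)$ depends only on the seminorms $\|\psi^\e(t)\|_{\dot H^1}$ and $\|\nabla\psi^\e(t)\|_{\dot H^1}$, which are uniformly bounded on $[0,T]$, integrating over $[0,T]$ keeps the bound finite. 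Combining the two estimates proves the lemma.

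The crux is the behaviour at the nodal set $\{\psi^\e=0\}$, where $V^\e_B$, the second derivatives of $|\psi^\e|$ and the velocity $u^\e$ all blow up, and two independent mechanisms are needed to tame this. The algebraic identity transfers the worst (third-order) contribution onto the smooth density $\rho^\e$, which is why the full hypothesis $\psi^\e_0\in H^3$ (and not merely $H^2$) enters through Lemma~\ref{h3}; and the remaining singular term is integrated only against $\nabla\sigma(u^\e)$, hence only where $u^\e$ is bounded, so that the dangerous factor $|\im(\nabla\psi^\e/\psi^\e)|=|u^\e|/\e$ appearing in Proposition~\ref{te} stays finite. Verifying that the pointwise identity may be used under the integral sign despite $\{\rho^\e=0\}$ is the one point requiring care; it is resolved by the explicit weight $\rho^\e$, which restricts all integrations to $\{\rho^\e>0\}$, the first (square-root-free) term being in any case globally in $L^1_{\rm loc}$.
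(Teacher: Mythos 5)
Your proof is correct and takes essentially the same route as the paper's: the same decomposition of $\rho^\e\nabla_x V_B^\e$ into a $\nabla\Delta_x\rho^\e$ part, handled via the $H^3$-regularity of Lemma~\ref{h3}, plus the term $\e^2\diver\left(\nabla\sqrt{\rho^\e}\otimes\nabla\sqrt{\rho^\e}\right)$, handled by the static estimate of Corollary~\ref{tec}, with your domination by a product test function and the uniform-in-time bounds merely making explicit what the paper leaves implicit. Incidentally, your constants $-\tfrac{\e^2}{4}$ and $+\e^2$ in the algebraic identity are the correct ones (the paper's displayed coefficients contain a sign/factor slip), but this is immaterial since both proofs estimate the two terms in absolute value.
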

This result is the key for proving that Bohmian measures furnishes a distributional solution of \eqref{maineq}. 
\begin{proof}
A simple computation shows that
\begin{align*}
\rho^\e\nabla_x
V_B^\e= & \ \frac{\e^2}2\nabla \Delta \rho^\e-\frac{\e^2}4 \diver\left(\frac{\nabla\rho^\e\otimes\nabla\rho^\e}{\rho^\e}\right)\\
= & \ \frac{\e^2}2\nabla \Delta \rho^\e - \e^2 \diver\left(\nabla\sqrt{\rho^\e}\otimes\nabla\sqrt{\rho^\e}\right).
\end{align*} 
Inserting this into the weak formulation of $\nabla_x V_B^\e\cdot  \nabla_p \beta^\e$, we can estimate
\begin{align*}
&\int_0^\infty\int_{\bbR^d} | \chi(x,t) | \, |
\nabla\sigma(u^\e(x,t))| \, |\nabla_x V^\e_B(x,t)| \rho^\e(x)\, dx\, dt \\
&\leq \ \frac{\e^2}2\int_0^\infty\int_{\bbR^d} | \chi(x,t)| |\nabla\sigma(u^\e(x,t))||\nabla \Delta \rho^\e| \, dx\, dt\, + \\
& \ + \e^2\int_0^\infty\int_{\bbR^d} | \chi(x,t)| \, |\nabla\sigma(u^\e(x,t))| \, \vert\diver\left(\nabla\sqrt{\rho^\e}\otimes\nabla\sqrt{\rho^\e}\right)\vert \, dx\, dt
\end{align*}
The two terms on the right hand side can the be treated as follows: 
By Lemma \ref{h3} we have $\psi^\e \in H^3(\R^d)$ and thus $\nabla \Delta \rho^\e$ is in $L^1(\R)$, for all $ t\in\R$. Therefore
\[
\int_0^\infty\int_{\bbR^d}|\chi(t,x)| \, | \nabla\sigma(u^\e(x,t))| \, |\nabla \Delta \rho^\e| \, dx\, dt <+\infty.
\]
On the other hand, inequality \eqref{semi2} directly yields (for any fixed $\e >0$)
\[
\int_0^\infty\int_{\bbR^d}| \chi(t,x)| \, |\nabla\sigma(u^\e(x,t))| \, \vert\diver\left(\nabla\sqrt{\rho^\e}\otimes\nabla\sqrt{\rho^\e}\right)\vert \, dx \, dt<+\infty,
\]
and the assertion is proved. 
\end{proof}

As a final 
preliminary step, let us recall the classical notion of the \emph{push-forward for measures}: Let $\mu_0\in\mathcal M(\R^d)$ and $f:\R^d\to\R^d$ a measurable map. Then 
$ \mu_1=f \, \# \,  \mu_0$ is called the \emph{push-forward of $\mu_0$ under $f$}, if for every $\sigma\in C_0(\bbR^d)$, it holds: 
\be\label{pf1}
\int_{\bbR^d}\sigma(x)d\mu_1=\int_{\bbR^d}\sigma(f(x))d\mu_0,
\ee
By using an approximation argument this condition can be relaxed in order to take into account test-functions $\sigma$ which are (only) 
integrable with respect to $\mu_1$, but do not necessarily lie in $C_0$, see Definition 1 and below in \cite{MC}. 
\begin{remark} Indeed, the equivalent definition of a push-forward: $$\mu_1 (\Omega) = \mu_0 (f^{-1}(\Omega)),$$ for all measurable sets $\Omega\subseteq \R^d$ 
and any measurable function $f$, allows one to choose indicator functions of measurable sets as test functions.
Based on this observation, an approximation argument by simple functions leads to the fact that both sides of \eqref{pf1} 
are finite and equal for  all test functions $\sigma$ which are $\mu_1$ integrable \cite{MC}.\end{remark}
The reason for using this slightly more general version of the usual push-forward formula will become clear in the proof given below. 
We first note that
\begin{align*}
&\int_0^\infty\iint_{\bbR^{2d}} \vp(t,x,p) d\beta^\e(t, dx,dp) dt  =\int_0^\infty\int_{\bbR^{d}}\vp(t,X^\e(t,x),u^\e(t,x)) \rho_0^\e(x) dx \, dt,
\end{align*}
by using the fact that $\beta^\e(t,x,p)$ is the push-forward of the measure $\rho_0^\e(x) \delta(p-u_0(x))$ under the Bohmian phase space flow 
$\Phi^\e_t$ defined in \eqref{phaseflow}.

\begin{proof}[Proof of Theorem \ref{main}] Let $\vp \in C_{\rm c}^\infty([0,+\infty)\times \R^d_x\times\R_p^d)$ be a compactly supported test-function such that $\varphi(t,x,p)=\chi(t,x)\sigma(p)$. Then, the 
weak formulation of \eqref{maineq} reads
\begin{align*}
&  \int_0^\infty \iint_{\bbR^{2d}} \sigma(p)( (\partial_t \chi(t,x)  + p\cdot \nabla_x \chi(t,x))  \beta^\e(t, dx, dp) dt \\
&  - \int_0^\infty \iint_{\bbR^{2d}} \chi (t,x)\nabla_x( V(x)+V^\e_B(t,x))\cdot \nabla_p \sigma(p)  \beta^\e(t, dx, dp) dt \\
& =  - \int_{\bbR^{d}}\chi (0,x) \sigma(u_0^\e(x)) \rho_0^\e(x)\, dx.
\end{align*}
First, consider the term involving $\nabla_x V^\e_B$: Having in mind the result of Lemma \ref{ppf1}, we are allowed to consider
$\chi(t,x) \nabla_x V^\e_B(t,x) \cdot \nabla\sigma(u^\e(t,x))$ as a test-function which is integrable with repsect to $\rho^\e(t,x)$. Thus, we can apply the push-forward formula \eqref{pf1} and infer
\begin{align*}
&\int_0^\infty\iint_{\bbR^{2d}}\chi(t,x) \nabla_x V^\e_B(t,x)\cdot \nabla \sigma(p) \beta^\e(t,dx,dp) \, dt= \\
& = \int_0^\infty\int_{\bbR^d}\chi(t, X^\e(t,x))
\nabla_x V^\e_B(t,X^\e(t,x)) \cdot \nabla\sigma(u^\e(t,X^\e(t,x))) \rho_0^\e(x) \, dx \, dt.
\end{align*}
In addition, the fact that $J^\e=\rho^\e u^\e \in L^1(\R^d)$ implies that the ``test-function" $\sigma(u^\e(t,x)) u^\e(t,x) \cdot \nabla \chi(t,x) $ is integrable with respect to $\rho^\e$ (note however, that 
$u^\e(t,x)$ is not necessarily continuous). Thus we can again apply the (generalized) push-forward 
formula  \eqref{pf1} to obtain
\begin{align*}
&\int_0^\infty\iint_{\bbR^{2d}} \sigma(p) p\cdot \nabla_x \chi (t,x) \beta^\e(t,dx,dp)\, dt = \\
& = \int_0^\infty\int_{\bbR^d}\sigma(u^\e(t,x)) u^\e(t,x) \cdot \nabla \chi(t,x)  \rho_0^\e(x) \, dx\, dt.
\end{align*}
All the other terms appearing in the weak formulation of \eqref{maineq} can then be treated analogously. Having in mind the ODE system \eqref{bohm2}, we consequently arrive at 
\begin{align*}
& \int_0^\infty \iint_{\bbR^{2d}} \left( (\partial_t \chi  + p\cdot \nabla_x \chi)\sigma(p) - \chi \nabla_x( V+V^\e_B)\cdot \nabla_p \sigma \right) \beta^\e(t, dx, dp) dt \\ 
& \ =  \int_0^\infty \int_{\bbR^{d}} \frac{d}{dt} \left(\chi (t, X^\e(t,x)) \sigma(u^\e(t,x)) \right)\rho_0^\e(dx)\, dt\\
& \ = - \int_{\bbR^{d}}\chi (0,x) \sigma(u_0^\e(x)) \rho_0^\e(dx).
\end{align*}
This proves that the Bohmian measure $\beta^\e(t)$ furnishes a weak solution of \eqref{maineq} in $\mathcal D'([0, \infty) \times \R^d_x\times \R^d_p)$ with initial data \eqref{ini}. The proof for $\mathcal D'(\R_t \times \R^d_x\times \R^d_p)$ is analogous.
\end{proof}

\section{Study of possible defects}\label{sec: def}

Having established the fact that $\beta^\e$ is indeed a weak solution of \eqref{maineq}, we first rewrite the equation in the following form
\be\label{rewrit}
 \partial_t \beta^\e + p \cdot \nabla_x \beta^\e - \nabla_x V \cdot \nabla_p \beta^\e = \diver_p (\nabla_x V_B^\e \beta^\e) ,
\ee
where $V_B$ is the Bohm potential defined in \eqref{bohmpot}. Using the weak convergence results given in \cite{MPS} we can pass to the limit on the left hand side of this equation (up to extraction of sub-sequences) 
in order to obtain 
\begin{equation}\label{limiteq}
 \partial_t \beta + p \cdot \nabla_x \beta - \nabla_x  V  \cdot \nabla_p \beta =  \mathcal F
\end{equation}
where the defect $\mathcal F$ is, as defined in \eqref{F}.
In order to gain some information on $\mathcal F$ (and to prove Theorem \ref{thneu}), we first derive from \eqref{limiteq} the following equation for the first moment of $\beta(t)$ with respect to $p\in \R^d$:
\be\label{limmom}
 \partial_t J + \diver_x \int_{\R^d_p} p \otimes p  \, \beta(t,x, dp) - \rho \nabla_x  V   = -\int_{\R_p^d} p \mathcal F(t,x,p) dp,
\ee
where $\rho, J$ denote the classical limits of $\rho^\e, J^\e$ given by \eqref{limdensities} and \eqref{limdensities1}. Note that the integral on the ride hand side is well-defined 
since energy conservation implies 
\[
\int_{\R^d} |p|^2 \beta^\e(t,x, dp) \leq C,
\]
where the constant $C>0$ is independent of $\e$. This allows us to conclude that all terms on the left hand side of \eqref{limmom} are well-defined in the sense of distributions on $\R_t \times \R_x^d$ 
and hence also the right hand side is. To proceed further we need the following result.

\begin{lemma} \label{lemCS} 
Let $\mu \in \mathcal M^+(\R^d_x\times \R^d_p)$ be such that 
\[
\iint_{\R^{2d}} |p|^2 \mu(dx, dp)  < +\infty, 
\]
and
\[
 \int_{\R^d_p} \mu (x, dp) = \rho\in \mathcal M^+(\R^d), \quad \int_{\R^d_p} p \mu (x, dp) = \rho(x) u(x),
\]
for some function $u(x)\in \R^d$ defined $\rho$ - $a.e.$. Then it holds
\[
\int_{\R^d_p} p \otimes p \,  \mu(x, dp) \geq \rho(x) \, u(x) \otimes u(x),
\]
with equality if and only if $\mu(x,dp) = \rho(x)\delta (p-u(x))$.
\end{lemma}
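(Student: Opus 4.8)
The plan is to reduce the statement to the elementary fact that a covariance matrix is positive semi-definite, after disintegrating $\mu$ over its $x$-marginal. By hypothesis $\rho=\int_{\R^d_p}\mu(x,dp)$ is exactly the push-forward of $\mu$ under the projection $(x,p)\mapsto x$, so the disintegration theorem furnishes a family of probability measures $\{\nu_x\}_x$ on $\R^d_p$, defined for $\rho$-a.e.\ $x$, with $d\mu(x,p)=d\nu_x(p)\,d\rho(x)$; here the symbol $\mu(x,dp)$ is understood as $\nu_x(dp)\,\rho(dx)$. The two moment hypotheses then read $\int_{\R^d_p}p\,d\nu_x(p)=u(x)$ and $\int_{\R^d_p}|p|^2\,d\nu_x(p)<+\infty$, both valid for $\rho$-a.e.\ $x$ (the latter following from the global bound $\iint|p|^2\,\mu(dx,dp)<\infty$ together with Fubini).

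With this normalisation, the three objects in the statement are matrix-, vector-, and scalar-valued measures in $x$, all absolutely continuous with respect to $\rho$. First I would observe that the difference of the two matrix measures has $\rho$-density
\be\label{covar}
\int_{\R^d_p}p\otimes p\,d\nu_x(p)-u(x)\otimes u(x)=\int_{\R^d_p}\big(p-u(x)\big)\otimes\big(p-u(x)\big)\,d\nu_x(p)=:C(x),
\ee
the cross terms cancelling because $\int p\,d\nu_x=u(x)$ and $\nu_x(\R^d_p)=1$. For any fixed $\xi\in\R^d$ one has
\be\label{psd}
\xi^{\mathrm T}C(x)\,\xi=\int_{\R^d_p}\big(\xi\cdot(p-u(x))\big)^2\,d\nu_x(p)\ge 0,
\ee
so $C(x)$ is positive semi-definite for $\rho$-a.e.\ $x$. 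Since $\rho\ge0$, multiplying by $\rho$ yields the claimed inequality in the sense of nonnegative matrix-valued measures.

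For the rigidity statement, equality means that $C(x)\,\rho(dx)$ is the zero matrix measure. Taking the trace (equivalently, summing \eqref{psd} over an orthonormal basis of $\R^d$) shows that $\int_{\R^d_p}|p-u(x)|^2\,d\nu_x(p)=0$ for $\rho$-a.e.\ $x$, which forces each $\nu_x$ to be the Dirac mass at its own mean, $\nu_x=\delta_{u(x)}$, that is $\mu(x,dp)=\rho(x)\delta(p-u(x))$. The converse is immediate, since for a mono-kinetic $\mu$ the matrix \eqref{covar} vanishes identically.

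The only delicate point is the measure-theoretic bookkeeping: justifying the disintegration (which is legitimate because $\rho$ is a $\sigma$-finite Radon measure on the Polish space $\R^d$) and interpreting the displayed inequality consistently as a relation between matrix-valued measures. Once the notation is pinned down, the analytic content is nothing more than the Cauchy--Schwarz inequality --- equivalently, the nonnegativity of the variance --- applied slice by slice in $x$.
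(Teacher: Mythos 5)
Your proposal is correct and follows essentially the same route as the paper: the paper's own proof is exactly the slice-wise identity $0 \leq \int_{\R^d_p} (p-u(x)) \otimes (p - u(x)) \, \mu(x, dp) = \int_{\R^d_p} p \otimes p \, \mu (x, dp) - \rho\, u \otimes u$ with equality iff $\mu(x,dp) = \rho(x)\delta(p-u(x))$. The only difference is that you make the disintegration $d\mu = d\nu_x\,d\rho$ and the matrix-measure bookkeeping explicit, which the paper leaves implicit in the notation $\mu(x,dp)$.
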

\begin{proof} 
The proof follows directly from the fact that for $\rho$ - $a.e.$ $x$ we have
\[
0 \leq \int_{\R^d_p} (p-u(x)) \otimes (p - u(x)) \mu(x, dp) = \int_{\R^d_p} p \otimes p \mu (x, dp) - \rho u \otimes u,
\]
with equality if and only if $\mu(x,dp) = \rho(x) \delta (p-u(x))$.
\end{proof}
Using the result of this lemma we can rewrite \eqref{limmom} as 
\be
 \partial_t J + \diver_x (\rho  u \otimes u) + \rho \nabla_x  V   = -\int_{\R_p^d} p \mathcal F(t,x,p) dp - \diver_x(\rho \mathcal B),
\ee
with a defect $\mathcal B(t,x) \ge 0$. In addition we know that 
\[
\mathcal B(t,x) = 0 \mbox{, if and only if, $\beta(t,x,p) = \rho (t,x)\delta(p-u(t,x))$.}
\]
On the other hand, we can consider the equation for $\beta^\e(t)$, take first the moment w.r.t. $p$ and then pass to the limit $\e \to 0_+$: Multiplying \eqref{rewrit} by $p\in \R^d$ and integrating yields 
the equation for the current density in the quantum hydrodynamical system \eqref{qhd}, i.e. 
\be
\partial_t J^\e + \diver \left(\frac{J^\e \otimes J^\e}{\rho^\e} \right) + \rho^\e \nabla V = \rho^\e  \nabla V^\e_B,
\ee
where we have used the fact that
\[
\int_{\R^d} p \otimes p \beta^\e (t,x,dp) = \frac{J^\e \otimes J^\e}{\rho^\e}, 
\]
since  $\beta^\e$ is mono-kinetic by definition. Using this, we can define a defect $\mathcal C(t,x) \ge 0$ via 
\be\label{C}
\lim_{\e \to 0_+}\left(\frac{J^\e \otimes J^\e}{\rho^\e} \right) =   \int_{\R^d} p \otimes p \beta (t,x,dp) + \mathcal C(t,x),
\ee
where the limit has to be understood in $\mathcal D' (\R_t\times \R^d_x)$.
In addition, we know that 
\[
\rho^\e\nabla V_B^\e=\frac{\e^2}2\nabla \Delta \rho^\e- {\e^2}\diver\left( \nabla \sqrt{\rho^\e} \otimes \nabla \sqrt{\rho^\e} \right),
\]
where the first term tends to zero as $\e \to 0_+$ by linearity. This consequently yields
\be \label{limbohm}
\partial_t J + \diver (\rho u\otimes u) + \rho \nabla V = - \diver(\mathcal A + \rho \mathcal B + \mathcal C),
\ee
where $\mathcal A(t,x)$ is defined by
\be \label{A}
 {\e^2}\left( \nabla \sqrt{\rho^\e} \otimes \nabla \sqrt{\rho^\e} \right)  \stackrel{\e\rightarrow 0_+
}{\longrightarrow}\mathcal A \quad \text{in $\mathcal D'(\R_t\times \R^d_x)$.}
\ee
In summary, we have the following partial characterization of $\mathcal F$.

\begin{lemma} \label{lemchar} The defect $\mathcal F$ defined in \eqref{F} satisfies
\be \label{Fchar}
\int_{\R_p^d} p \mathcal F(t,x,p) dp = - \diver( \mathcal A(t,x) + \mathcal C(t,x)),
\ee
where $\mathcal A$, $\mathcal C$ are given by \eqref{A}, \eqref{C}, respectively.
\end{lemma}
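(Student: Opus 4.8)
The plan is to compute the momentum balance (the first $p$-moment of the current) in two different ways — \emph{first} passing to the limit $\e\to 0_+$ and \emph{then} taking the moment, versus \emph{first} taking the moment of the pre-limit equation and \emph{then} passing to the limit — and to read off \eqref{Fchar} from the discrepancy between the two routes.

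First I would take the first $p$-moment of the limiting kinetic equation \eqref{limiteq}: multiplying by $p$ and integrating over $\R^d_p$ produces \eqref{limmom}. The only nonstandard point here is that $p$ is not an admissible test weight, so one must invoke the uniform second-moment bound $\int_{\R^d}|p|^2\beta^\e(t,x,dp)\le C$ coming from energy conservation, which survives in the limit and guarantees that $\diver_x\int p\otimes p\,\beta$ is a well-defined distribution; the quantity $\int_{\R^d_p}p\,\mathcal F\,dp$ is then \emph{defined} implicitly as the term closing \eqref{limmom}. Next I would apply Lemma \ref{lemCS} to the slice $\beta(t,x,\cdot)$, whose zeroth and first $p$-moments are $\rho$ and $\rho u$, to write
\[
\int_{\R^d_p} p\otimes p\,\beta(t,x,dp) = \rho\, u\otimes u + \rho\,\mathcal B, \qquad \mathcal B\ge 0,
\]
which converts \eqref{limmom} into the ``limit-then-moment'' balance $\partial_t J + \diver_x(\rho\, u\otimes u)+\rho\nabla_x V = -\int_{\R^d_p} p\,\mathcal F\,dp - \diver_x(\rho\mathcal B)$.

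Independently, I would take the first $p$-moment of the pre-limit equation \eqref{rewrit}. Since $\beta^\e$ is mono-kinetic, the identity $\int p\otimes p\,\beta^\e = J^\e\otimes J^\e/\rho^\e$ holds exactly, so this moment is precisely the quantum hydrodynamic momentum equation of \eqref{qhd}, with the Bohm term $\rho^\e\nabla V_B^\e$ on the right. I would then pass to the limit $\e\to 0_+$ term by term. The delicate step is the convective stress: by the very definition \eqref{C} one has $J^\e\otimes J^\e/\rho^\e \to \int p\otimes p\,\beta + \mathcal C$, which by the previous display equals $\rho\, u\otimes u + \rho\mathcal B + \mathcal C$ — this is exactly where taking the moment and taking the weak-$*$ limit fail to commute. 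For the Bohm term I would use the splitting established in the proof of Lemma \ref{ppf1},
\[
\rho^\e\nabla V_B^\e = \frac{\e^2}{2}\nabla\Delta\rho^\e - \e^2\diver\left(\nabla\sqrt{\rho^\e}\otimes\nabla\sqrt{\rho^\e}\right),
\]
where the first term tends to $0$ in $\mathcal D'$ by linearity (test against a smooth function, integrate the three derivatives onto it, and use $\|\rho^\e\|_{L^1}=1$) and the second converges to $-\diver\mathcal A$ by \eqref{A}. This produces the ``moment-then-limit'' balance \eqref{limbohm}.

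Finally I would compare the two balances. Their left-hand sides $\partial_t J + \diver(\rho\, u\otimes u)+\rho\nabla V$ are literally identical, so the right-hand sides must agree as distributions on $\R_t\times\R^d_x$; the common $\diver(\rho\mathcal B)$ contribution then cancels, leaving exactly \eqref{Fchar}. I expect the main obstacle to lie entirely in the first route: justifying that multiplication by the unbounded weight $p$ is compatible with the operations performed — i.e. that $\int p\,\mathcal F\,dp$ makes distributional sense and that $\diver_x\int p\otimes p\,\beta$ is legitimate — which rests on the uniform energy bound $\int|p|^2\beta^\e\le C$. The conceptual heart of the lemma is precisely that this moment does \emph{not} commute with the $\e\to 0_+$ limit on the convective term, and the resulting concentration defect $\mathcal C$, together with the kinetic-energy defect $\mathcal A$ from the Bohm potential, is what the two-route comparison isolates.
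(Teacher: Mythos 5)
Your proposal is correct and follows essentially the same route as the paper: the paper likewise derives \eqref{Fchar} by comparing the ``limit-then-moment'' balance (obtained from \eqref{limiteq} via the uniform second-moment bound and Lemma \ref{lemCS}, producing $\rho\,\mathcal B$) with the ``moment-then-limit'' balance \eqref{limbohm} (obtained from the quantum hydrodynamic momentum equation, the splitting $\rho^\e\nabla V_B^\e=\frac{\e^2}{2}\nabla\Delta\rho^\e-\e^2\diver(\nabla\sqrt{\rho^\e}\otimes\nabla\sqrt{\rho^\e})$, and the definitions \eqref{A}, \eqref{C}), and then cancels the common $\diver(\rho\,\mathcal B)$ term. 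The only difference is presentational: you make explicit the justification that multiplication by the weight $p$ is admissible, which the paper handles with the same energy-conservation remark preceding Lemma \ref{lemCS}.
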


In a last step, this can now be compared with the classical limit of the quantum hydrodynamic system \eqref{qhd} via Wigner measures. In  \cite{GaMa} it has been shown that 
$$
J^\e(t,x) \stackrel{\e\rightarrow 0_+
}{\longrightarrow}J(t,x) := \int_{\R^d} w(t,x,dp)
$$
which satisfies
\be \label{limwig}
\partial_t J + \diver (\rho u\otimes u) + \rho \nabla V = - \diver \rho \mathcal T,
\ee
with a temperature tensor $\mathcal T(t,x) \ge 0$. The latter is found to be equal to zero, if and only if $w(t,x,p) = \rho(t,x) \delta(p-u(t,x))$. This can now be used as follows:

\begin{proof}[Proof of Theorem \ref{thneu}] 
Let $d=1$. Having in mind that, by assumption, the initial limiting Bohmian and Wigner measures are equal $\beta_0(x,p) = w_0(x,p)$, the 
uniqueness of solutions, together with \eqref{limwig} and \eqref{limbohm}, implies
\be\label{Teq}
\rho \mathcal T = \mathcal A + \rho \mathcal B + \mathcal C,\quad \mbox{in $\mathcal D'(\R_t \times \R^d_x)$},
\ee
Since all terms on the right hand side are greater or equal to zero, we infer that 
\[
w(t,x,p) = \rho(t,x) \delta(p-u(t,x)), \mbox{ if and only if, 
$\mathcal A =  \mathcal B = \mathcal C = 0$.}
\] 
By definition, this implies that 
$$
\rho^\e\nabla V_B^\e \stackrel{\e\rightarrow 0_+
}{\longrightarrow} 0,
$$
as well as 
$$
\lim_{\e \to 0_+} \int_{\R^d} p \otimes p \beta^\e (t,x,dp) =   \int_{\R^d} p \otimes p \beta (t,x,dp)  = \rho u \otimes u.
$$
By Lemma \ref{lemCS}, we conclude $\beta(t,x,p) = \rho (t,x)\delta(p-u(t,x))$ and the assertion is proved.
\end{proof}
\begin{remark} In dimensions $d>1$ we can not conclude as before, since identity \eqref{Teq} has to be replaced by 
\[
\rho \mathcal T = \mathcal A + \rho \mathcal B + \mathcal C + \mathcal D,
\]
for some $\mathcal D$ satisfying $\diver \mathcal D(t,x) = 0$.
\end{remark}

\section{Bohmian measures for semi-classical wave packets}\label{sec: coh}

The theory of semi-classical wave packets is very well developed, see e.g. \cite{Ha, HaJo, Pa1, Pa2} and the references given therein 
(see also \cite{APPP, CaFe} for a recent application in the context of nonlinear Schr\"odinger equations). It allows us to approximate the solution to \eqref{sch} via
\be\label{coherent}
\psi^\e(t,x) \stackrel{\e\rightarrow 0_+
}{\sim} u^\e(t,x)=\e^{-d/4} v\left(t,\frac{x -X(t)}{\sqrt{\e}} \right) e^{i (P(t) \cdot (x- X(t)) +S(t))/\e},
\ee
where $P(t), X(t)$ solve the Hamiltonian system \eqref{classical} and $S(t)$ is the associated classical action, i.e.
\[
S(t) = \int_0^t \frac{1}{2} |P(s)|^2- V(X(s)) ds.
\]
The envelope function $v(t,y)$ is thereby found to be a solution of the following $\e$-independent Schr\"odinger equation (see also the proof of Theorem \ref{th2} below):
\be\label{veq}
i \partial_t  v = -\frac{1}{2}\Delta_y v+ \frac{1}{2} \,  (Q(t)y,y) v ,\quad
v(t=0,y)=a(y),
\ee
where $Q(t):= \text{Hess}\, V(X(t))$
denotes the Hessian of the potential $V(x)$ evaluated at the classical trajectory $X(t)$ and $a\in \mathcal S(\R^d)$ is induced by the initial data $\psi^\e_0$ given in Theorem \ref{th2}. In other words, $v(t,x)$ solves a 
linear Schr\"odinger equation with time-dependent quadratic potential. Under suitable assumptions on $V$ (satisfied by the hypothesis of Theorem \ref{th2}), one can show, see e.g. 
\cite{APPP,Ha, HaJo, CaFe, Pa1, Pa2}, that $u^\e(t,x)$ approximates the exact solution $\psi^\e(t,x)$ of \eqref{sch} in the following sense
\be\label{cohest}
\| \psi^\e (t,\cdot) - u^\e(t,\cdot) \|_{L^2(\R^d)} \leq C_1 \sqrt{\e} e^{C_2 t},
\ee
provided the initial data $\psi^\e(0,x)$ is of the form given in Theorem \ref{th2}. In other words, the approximation is valid up to times of order $ \ln (1/\e)$, at most.
\begin{remark} Note that in contrast to the WKB approximation, the coherent state ansatz does not suffer from 
the appearance of caustics (although it is sensitive to them through equation \eqref{veq} where the caustics are somehow
hidden). In addition, it assumes 
that the amplitude concentrates on the scale $\sqrt{\e}$. The latter has been shown to be a critical scaling in the theory of Bohmian measures, cf. the case studies in \cite{MPS}.
\end{remark}

\begin{proof}[Proof of Theorem \ref{th2}] We first note that the solution to \eqref{veq} satisfies $\|v(t,\cdot ) \|_{L^2} = \| a \|_{L^2}$ for all $t\in \R$. Thus, the Wigner transformation of  $u^\e(t,x)$ 
satisfies 
$$
w^\e[u^\e]  \stackrel{\e\rightarrow 0_+
}{\longrightarrow} w \quad \text{in $L^\infty(\R_t; \mathcal S'(\R^d_x \times \R^d_p)) \, 
{\rm weak}^\ast$}.
$$
The corresponding Wigner measure is well known, cf. \cite{LiPa, MPS}:
$$
 w(t,x,p) =\| a\|_{L^2}^2 \, \delta(x-X(t)) \delta (p -P(t)).
$$
From the estimate \eqref{cohest} and the classical results given in \cite{LiPa}
we conclude that the Wigner transformation of the exact solution $w^\e[\psi^\e]$ 
converges to the same limiting measure $w$, uniformly on compact time-intervals $I \subset \R_t$. 

In order to prove that $w(t) = \beta(t)$, 
we perform the unitary transformation: $\psi^\e\mapsto v^\e$, defined via 
\be\label{trafo}
\psi^\e(t,x) = \e^{-d/4} v^\e\left(t,\frac{x -X(t)}{\sqrt{\e}} \right) e^{i (P(t) \cdot (x- X(t)) +S(t))/\e}.
\ee

Using this transformation, equation \eqref{sch} is easily found to be equivalent to 
\be\label{vepseq}
i \partial_t  v^\e = -\frac{1}{2}\Delta_y v^\e+ V^\e(t,y)v^\e ,\quad v^\e(t=0,x)   = a(x),
\ee
where $V^\e(t,y)$ is given by 
\[
V^\e(t,y) = \frac{1}{\e} \left (V(X(t)+\sqrt{\e}y) - V(X(t)) - \sqrt{\e} \nabla V(X(t))\cdot y \right).
\]
Obviously, for $C^2$ potentials $V$ equation \eqref{vepseq} converges to \eqref{veq} as $\e \to 0_+$. 
This together with sufficient a-priori bounds on $v^\e(t)$ yields the estimate \eqref{cohest}, cf. \cite{CaFe} for more details. On the other hand, using \eqref{trafo}, the 
Bohmian measure $\beta^\e(t)$ of the exact solution $\psi^\e(t)$ can be seen to act on Lipschitz test-function $\varphi\in C_0(\R^d_x\times \R^d_p)$ via
\begin{align*}
 \langle \beta^\e(t), \varphi\rangle  = \int_{\R^d} | v^\e(t,y)|^2 \varphi\left(X(t) + \sqrt{\e} y, \sqrt{\e} \im \left(\frac{ \nabla v^\e(t, y)}{ v^\e(t, y)} \right ) +P(t)\right) \, dy.
\end{align*}
Using the Lipschitz continuity of $\varphi$ we can estimate
\begin{align*}
&  \Big |\varphi\left(X(t) + \sqrt{\e} y, \sqrt{\e} \im \left(\frac{ \nabla v^\e(t, y)}{ v^\e(t, y)} \right ) +P(t)\right)  - \varphi(X(t), P(t))\Big |\\
& \ \leq C_\varphi\sqrt{\e} \left(\vert y\vert 
  + \Big |\im \left(\frac{ \nabla v^\e(t, y)}{ v^\e(t, y)} \right)\Big |\right),  
\end{align*}
for some positive constant $C_\varphi >0$. In view of this, we obtain
\begin{align*}
 & \ \Big | \langle \beta^\e(t), \varphi\rangle - \int_{\R^d} | v^\e(t,y) | ^2  \varphi  (X(t), P(t)) \, d y \Big |   \\
 & \ \leq  C_\varphi\sqrt{\e}  \int_{\R^d} |y|  | v^\e(t,y)|^2 dy +  \sqrt{\e}\int_{\R^d}   | v^\e(t,y)| | \nabla v^\e(t,y)| dy \\
  & \ \leq  C_\varphi\sqrt{\e}  \,  \|v^\e(t) \|_{L^2} \big( \| \, |y| v^\e(t) \| _{L^2} + \ \|\nabla v^\e(t) \|_{L^2}  \big) ,
\end{align*}
where the last inequality directly follows from Cauchy-Bunjakowski-Schwarz. In order to proceed further we need the following lemma.
\begin{lemma}\label{a-priori} Let $V\in C^3_{\rm b}(\R^d)$. Then the solution of \eqref{vepseq} satisfies 
\[
\| \, |y| v^\e(t) \| _{L^2} \leq C_1 ,\quad  \|\nabla v^\e(t) \|_{L^2}  \leq C_2, \quad \forall \, t\in  [-T, T]\subset \R ,
\]
where $C_1, C_2$ are some positive constants, independent of $\e$.
\end{lemma}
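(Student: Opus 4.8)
The plan is to prove both bounds simultaneously through a coupled Gronwall estimate for the kinetic energy and the second moment, exploiting the key structural fact that, although $V^\e$ carries a prefactor $1/\e$, Taylor's theorem turns it into a quantity of at most quadratic growth with an $\e$-\emph{independent} constant. First I would record the two pointwise bounds on the potential. Since $V\in C^3_{\rm b}\subset C^2_{\rm b}$, the second-order Taylor expansion of $V$ about $X(t)$ gives
\[
|V^\e(t,y)| = \frac1\e\big|V(X(t)+\sqrt\e y) - V(X(t)) - \sqrt\e\,\nabla V(X(t))\cdot y\big| \leq \tfrac12\|\mathrm{Hess}\,V\|_{L^\infty}\,|y|^2 .
\]
Differentiating $V^\e$ in time, using $\dot X = P(t)$ (bounded on $[-T,T]$) and the first-order Taylor expansion of $\nabla V$ (which is where the third derivatives of $V$ enter), I obtain
\[
\partial_t V^\e(t,y) = \frac1\e\,\dot X(t)\cdot\big(\nabla V(X+\sqrt\e y) - \nabla V(X) - \sqrt\e\,\mathrm{Hess}\,V(X)\,y\big),\qquad |\partial_t V^\e(t,y)| \leq C|y|^2,
\]
with $C$ depending only on $\|V\|_{C^3_{\rm b}}$ and $\sup_{[-T,T]}|P|$, uniformly in $\e$.

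Next I introduce the second moment $M(t):=\||y|\,v^\e(t)\|_{L^2}^2$, the kinetic term $K(t):=\|\nabla v^\e(t)\|_{L^2}^2$, and the time-dependent energy $E^\e(t):=\tfrac12 K(t)+\int_{\R^d} V^\e(t,y)|v^\e(t,y)|^2\,dy$, while $\|v^\e(t)\|_{L^2}^2=\|a\|_{L^2}^2$ is conserved. From the continuity equation $\partial_t|v^\e|^2 + \diver\,\im(\overline{v^\e}\nabla v^\e)=0$ associated to \eqref{vepseq}, testing against $|y|^2$ yields $\dot M = 2\int y\cdot\im(\overline{v^\e}\nabla v^\e)$, hence $|\dot M|\leq 2\sqrt{M\,K}$ by Cauchy--Schwarz. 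The standard energy computation for a Schr\"odinger flow with real time-dependent potential gives $\dot E^\e = \int \partial_t V^\e\,|v^\e|^2$, so the bound on $\partial_t V^\e$ produces $|\dot E^\e|\leq C\,M$. Finally the bound $|V^\e|\leq C|y|^2$ gives $\big|\int V^\e|v^\e|^2\big|\leq C\,M$, whence $E^\e\geq \tfrac12 K - C M$ and $K\leq 2E^\e + 2C M$.

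These inequalities close into a single differential inequality. Setting $\Phi(t):=E^\e(t)+(C+1)M(t)$, the relations above give $\Phi\geq \tfrac12 K + M\geq 0$, so $M\leq\Phi$ and $K\leq 2\Phi$; combining $|\dot E^\e|\leq C M$ with $|\dot M|\leq 2\sqrt{MK}\leq 2\sqrt2\,\Phi$ yields $|\dot\Phi|\leq C'\Phi$ on $[-T,T]$. Gronwall's lemma then gives $\Phi(t)\leq\Phi(0)e^{C'|t|}$, and since $a\in\mathcal S(\R^d)$ the initial value $\Phi(0)=\tfrac12\|\nabla a\|_{L^2}^2 + \int V^\e(0,\cdot)|a|^2 + (C+1)\int|y|^2|a|^2$ is finite and bounded uniformly in $\e$ (again using $|V^\e(0,y)|\leq C|y|^2$). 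Reading off $M(t)\leq\Phi(t)$ and $K(t)\leq 2\Phi(t)$ then produces the asserted $\e$-uniform constants $C_1,C_2$ on $[-T,T]$.

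The main obstacle is rigor rather than algebra: the moment and energy identities presuppose that $v^\e(t)$ genuinely lives in the weighted space $\Sigma:=\{f\in H^1(\R^d):|y|f\in L^2\}$, with enough decay and regularity to justify the integrations by parts and the differentiation under the integral sign. I would address this by first establishing well-posedness of \eqref{vepseq} in $\Sigma$, which holds because $V^\e$ is smooth with at most quadratic growth and $a\in\mathcal S(\R^d)\subset\Sigma$; alternatively, one runs the identical estimates on a regularized problem (replacing the weight $|y|^2$ by $|y|^2/(1+\delta|y|^2)$ and $a$ by smooth compactly supported approximants) and passes to the limit, the uniform constants surviving because every bound above is $\delta$- and $\e$-independent.
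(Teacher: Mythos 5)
Your proof is correct, but it takes a genuinely different route from the paper's. The paper does not prove the weighted bounds at all: it quotes from \cite{CaFe} the uniform-in-$\e$ estimates $\| \, |y| v^\e(t)\|_{L^2}\le C_1$ \emph{and} $\| \, |y|^3 v^\e(t)\|_{L^2}\le C_3$, and then only performs the energy step, using the identity $\frac{d}{dt}E^\e=\int_{\R^d}\partial_t V^\e\,|v^\e|^2\,dy$ together with the bound $|\partial_t V^\e|\le |\dot X(t)|\,|y|^3\sqrt\e\,\sup|\partial^3 V|$; the cubic weight there is exactly why the third-moment estimate from \cite{CaFe} is invoked. You instead prove everything self-containedly and simultaneously: your Taylor bounds $|V^\e|\le \tfrac12\|\mathrm{Hess}\,V\|_{L^\infty}|y|^2$ and $|\partial_t V^\e|\le C|y|^2$ (the latter is in fact sharper than the paper's stated bound and requires only the \emph{second} moment), the virial identity $\dot M=2\int y\cdot\im(\overline{v^\e}\nabla v^\e)\,dy$ with $|\dot M|\le 2\sqrt{MK}$, the same energy identity, and the coupled Gronwall closure via $\Phi=E^\e+(C+1)M$. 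What the paper's route buys is brevity, at the price of outsourcing the moment bounds (including a third moment) to an external reference; what your route buys is a self-contained argument needing only $V\in C^3_{\rm b}$ and $a\in\mathcal S$, with the second-moment bound $C_1$ emerging as part of the output rather than being assumed, and with no need for any control of $\| \, |y|^3 v^\e\|_{L^2}$. The rigor caveat you raise is genuine: the moment and energy identities presuppose that $v^\e(t)$ lies in $\Sigma=\{f\in H^1: |y|f\in L^2\}$, and your two proposed fixes (well-posedness of \eqref{vepseq} in $\Sigma$ for smooth subquadratic potentials, which is classical and indeed underlies \cite{CaFe}, or the truncated weight $|y|^2/(1+\delta|y|^2)$ plus passage to the limit) are both standard and adequate, since all your constants are $\delta$- and $\e$-independent.
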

\begin{proof}[Proof of Lemma \ref{a-priori}]
In \cite{CaFe} it is shown  that, if 
$V$ is quadratically bounded, i.e. $\partial^\gamma V (x) \in L^\infty$, for all $|\gamma| \ge 2$, it holds:
\be \label{estCa}
\| \, |y| v^\e(t) \| _{L^2} \le C_1 , \quad \|\, |y|^3 v^\e(t) \|_{L^2} \leq C_3, \forall \, t\in \R.
\ee
It therefore only remains to show the estimate for $\nabla v^\e(t)$. This follows by considering the energy corresponding to \eqref{vepseq}, i.e. 
\[
E^\e(t) =  \frac{\e^2}{2} \int _{\R^d} | \nabla v^\e (t,x) |^2 dx + \int_{\R^d} V^\e(t,x) |  v^\e (t,x) |^2 dx ,
\]
which satisfies
\[
\frac{d}{dt}  E^\e(t) = \int_{\R^d} \partial_t V^\e(t,x) |  v^\e (t,x) |^2 dx .
\]
Since 
$$|\partial_t V^\e(t,x) | \leq |\dot X(t)| |y|^3 \sqrt{\e} \sup |\partial^3 V (X(t) + s \sqrt{\e} y)|,
$$
the assumption $V\in C^3_{\rm b}$, together with \eqref{estCa}, yields the desired bound on $\nabla v^\e(t)$.
\end{proof}
Using the a-priori estimates established in Lemma \ref{a-priori} we obtain
\[
\Big | \langle \beta^\e(t), \varphi\rangle - \int_{\R^d} | v^\e(y,t) | ^2  \varphi  (X(t), P(t)) \, d y \Big |\stackrel{\e\rightarrow 0_+
}{\longrightarrow}0.
\]
In other words, we have that
\be\label{betalim}
\beta(t) = \|v(t,\cdot)\|_{L^2}^2 \delta(x-X(t))\delta(p-P(t)),
\ee
Having in mind that $\|v(t,\cdot ) \|_{L^2} = \| a \|_{L^2}$ this proves assertion (1) of Theorem \ref{th2}.\\

In order to conclude assertion (2) of Theorem \ref{th2} we recall the following formula, stated in \cite[Remark 3.8]{MPS}. For all $t\in \R$ and for all test-functions 
$\varphi \in C_0(\R^d_x\times\R^d_p)$, $\chi\in C_0(\R_t)$ it holds
\begin{equation}\label{formula}
\begin{split}
&  \int_{\R} \chi  (t) \iint_{\R^{2d}} \varphi(x,p) \beta^\e (t, dx, dp) dt= \\
& =  \int_{\R} \chi  (t) \int_{\R^{d}} \varphi(X^\e(t,x),P^\e(t,x))\rho_0^\e(x) \, dx \, dt \\
& =  \,  \int_{\R} \chi  (t) \int_{\R^{d}} \varphi(X^\e(t,x_0+\sqrt{\e} y),P^\e(t,x_0+\sqrt{\e} y))|a(y)|^2 dy \, dt ,
\end{split}
\end{equation}
where in the second equality we set $y= (x-x_0 )/ \sqrt{\e}$ and recall that the initial density is given by 
$$\rho^\e_0(x) = \e^{-d/2} \left| a\left(\frac{x-x_0}{\sqrt{\e}}\right)\right|^2.$$
Now, let the function $\zeta^\e(t,y)$ be defined as
$$
\zeta^\e: \R_t \times \R^d_y \ni (t,y) \mapsto (Y^\e(t,y), Z^\e(t,y))\in \R^d_x\times \R^d_p,
$$
where
$$
Y^\e (t,y) = X^\e (t, x_0 + \sqrt{\e} y), \quad Z^\e (t,y) = P^\e (t, x_0 + \sqrt{\e} y),
$$
are the rescaled Bohmian trajectories. The associated Young measure (or, parametrized measure):
$$\omega_{t,y}:\R_t\times\R^d_y \to \mathcal M^+( \R^d_x\times \R^d_p)\, ; \ (t,y)\mapsto \omega_{t,y}(x,p),$$
is defined, after extraction of a subsequence, via
$$
\lim_{\e \to 0} \iint_{\R_t\times \R_y^d} \sigma ( t, y, \zeta_\e(t,y)) \, dy \, dt = \iint_{\R_t \times \R_y^d} \iint_{\R^{2d}} \sigma ( t,y,x,p) d\omega_{t,y}(x,p) \, dy\,  dt,
$$
for any test function $\sigma \in L^1(\R_t\times \R^d_y; C_0(\R^{2d}))$, cf. \cite{Ba, Pe1, Pe2}. 

By passing to the limit $\e \to 0_+$ in \eqref{formula} we find that (after the choice of an appropriate sub-sequence):
\begin{align*}
 \int_{\R} \chi  (t) \iint_{\R^{2d}} \varphi(x,p) \beta (t, dx, dp) dt =   \int_{\R} \chi  (t) \int_{\R^{d}} \varphi(x,p)\omega_{t,y} (dx, dp) |a(y)|^2 dy \, dt ,
\end{align*}
and thus
\[
\beta(t,x,p)= \int_{\R^d} |a(y)|^2 \omega_{t,y} (x,p) dy.
\]
Upon inserting \eqref{betalim} with $\|v(t,\cdot ) \|_{L^2} = \| a \|_{L^2}$ , this implies 
$$
\omega_{t,y}(x,p) = \nu (t,y) \delta(p-P(t)) \delta(x-X(t)).
$$
Since $0\leq \nu(y,t) \le 1$, $|a(y)|^2>0$, by assumption, and
\[
\int_{\R^d} |a(y)|^2 dy = \int_{\R^d} |a(y)|^2 \nu(t,y) dy,
\]
for all $t\in \R$, we conclude $\nu(t,y) \equiv 1$ a.e. with respect to $|a(y)|^2dy$ and hence
$$
\omega_{t,y}(x,p) =  \delta(p-P(t)) \delta(x-X(t)).
$$
In other words, $\omega_{t,y}$ is found to be concentrated in a single point in phase space $\R^d_x\times \R^d_p$.
By a well known result of Young measure theory (see e.g. \cite[Proposition 1]{GS}), we know that the fact that $\omega_{t,x}$ is concentrated in a point is equivalent to the convergence of the re-scaled trajectories, i.e.
$$
Y^\e  \stackrel{\e\rightarrow 0_+ }{\longrightarrow} X, \quad Z^\e  \stackrel{\e\rightarrow 0_+ }{\longrightarrow} P,
$$
locally in measure on $\R_t \times \R^d_y$.\end{proof}

\begin{remark}\label{rem: last}
In $d=1$, assertion (1) of Theorem \ref{th2} directly follows from Theorem \ref{thneu}, since $w(t)$ is obviously mono-kinetic. In addition, one should note that the established 
local in measure convergence implies (see e.g. \cite[Section 13]{Do}) that there exists a sub-sequence $\{ \e_n \}_{n \in \mathbb N}$, going to zero as $n\to \infty$, such that 
\[Y^{\e_n}  \stackrel{n\rightarrow \infty }{\longrightarrow} X, \quad Z^{\e_n}  \stackrel{n \rightarrow \infty }{\longrightarrow} P, \quad \mbox{a.e. in $\Omega \subseteq \R_t\times \R^d_y$.}
\]
\end{remark}

The rescaling of the position variable $x= x_0 + \sqrt \e y$ in the Bohmian trajectories is critical for the coherent wave packet of Theorem \ref{thneu} in the sense that it
allows, at any time $t\in \R$, to directly connect the Bohmian measure associated to the wave function with the Young measure of the Bohmian trajectories (with rescaled initial position).
It seems that this cannot be done for any other rescaling of $u^\e$ (e.g. replacing $\sqrt \e$ by $\e^\alpha$).\\

{\bf Acknowledgment.} The authors want to thank the referees for helpful remarks in improving the paper and in particular for the short, direct argument given in the proof of Lemma \ref{lemCS}.


\appendix

\section{An example with non mono-kinetic limiting Bohmian measure}

In \cite{MPS} we considered several different examples of $\psi^\e$ and computed the corresponding limiting Bohmian measure $\beta$ and the corresponding Wigner measure $w$. We found that in general 
$w\not = \beta$ except in rather special situations. In fact, in all the examples given in \cite{MPS} 
we find $w= \beta$ only in the mono-kinetic case. 
Together with the results stated in Theorem \ref{thneu} and Theorem \ref{th2} this might yield the wrong impression that $w$ and $\beta$ can only coincide if they are both mono-kinetic phase space distributions. 
The following example will illustrate that this is 
in general not the case: 

Consider an $\e$-dependent family of wave functions $\{u^\e\}_{0<\e\leq1}$ given by
\[
u^\e(x)=a^\e(x)e^{i S(x)/ \e},
\]
where the amplitude $a^\e$ reads
\[
a^\e(x) = \e^{-d/4}\rho^{1/2} \left(\frac{\vert x\vert}{\e^{1/2}}\right), 
\]
with some $\e$-independent profile $\rho\in\mathcal S(\bbR;\R)$, i.e. smooth and rapidly decaying.
In addition, let $S\in C_{\rm b}(\bbR^d)\cap C^2_{\rm b}(\bbR^d/\{0\})$, such that $S(x)$ is even and has a cone-like singularity at $x=0$:
$$
\lim_{\delta\to 0}\nabla S(\delta\omega)=\chi(\omega),\ \forall\omega\in \mathbb S^{d-1}\, ,
$$
for $\chi\in C^\infty(\mathbb S^{d-1})$. 

\begin{lemma} Let $u^\e$ be as given above, then
\[
\beta(x,p)=w(x,p) =\frac 1{\vert \mathbb S^{d-1}\vert}\int_\R \rho (\vert y\vert)dy\, \int_{\mathbb S^{d-1}}\delta(p-\chi(\omega))d\omega\otimes \delta(x).
\]
\end{lemma}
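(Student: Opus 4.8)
The plan is to compute the limiting Bohmian measure $\beta$ and the limiting Wigner measure $w$ separately and to observe that the two coincide. The computation for $\beta$ is essentially algebraic once one notes that, since $u^\e=a^\e e^{iS/\e}$ has a \emph{real} amplitude $a^\e=\e^{-d/4}\rho^{1/2}(|x|/\sqrt\e)$, the associated velocity field is, for every fixed $\e>0$ and away from $x=0$,
\[
\e\,\im\!\left(\frac{\nabla u^\e}{u^\e}\right)=\e\,\im\!\left(\frac{\nabla a^\e}{a^\e}+\frac{i}{\e}\nabla S\right)=\nabla S(x),
\]
which is $\e$-independent. Hence $\beta^\e(x,p)=\rho^\e(x)\,\delta(p-\nabla S(x))$ with $\rho^\e=\e^{-d/2}\rho(|x|/\sqrt\e)$. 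Testing against $\varphi\in C_0(\R^d_x\times\R^d_p)$ and substituting $x=\sqrt\e\,z$ gives
\[
\langle\beta^\e,\varphi\rangle=\int_{\R^d}\rho(|z|)\,\varphi\!\left(\sqrt\e\,z,\ \nabla S(\sqrt\e\,z)\right)dz.
\]
Since $\sqrt\e\,z\to0$ and $\nabla S(\sqrt\e\,z)=\nabla S\big(\sqrt\e\,|z|\cdot z/|z|\big)\to\chi(z/|z|)$ by the cone-singularity hypothesis, dominated convergence (the integrand is bounded by $\|\varphi\|_\infty\,\rho(|z|)\in L^1$) yields $\langle\beta,\varphi\rangle=\int_{\R^d}\rho(|z|)\,\varphi(0,\chi(z/|z|))\,dz$. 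Passing to polar coordinates $z=r\omega$ factorizes this into the product of the radial prefactor $\int_0^\infty\rho(r)\,r^{d-1}\,dr=\tfrac{1}{|\mathbb S^{d-1}|}\int_{\R^d}\rho(|y|)\,dy$ and $\int_{\mathbb S^{d-1}}\varphi(0,\chi(\omega))\,d\omega$, reproducing the claimed formula.

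For the Wigner part, I would write out $w^\e[u^\e]$ and rescale $x=\sqrt\e\,\xi$. The two amplitude factors $\e^{-d/4}\rho^{1/2}(|\xi\mp\tfrac{\sqrt\e}{2}y|)$ converge to $\e^{-d/4}\rho^{1/2}(|\xi|)$, while the phase becomes
\[
\frac1\e\big[S(\sqrt\e\,\xi-\tfrac\e2 y)-S(\sqrt\e\,\xi+\tfrac\e2 y)\big]\xrightarrow{\e\to0}-\chi(\xi/|\xi|)\cdot y,
\]
by a first-order Taylor expansion of $S$ between the two nearby arguments combined with $\nabla S(\sqrt\e\,\xi)\to\chi(\xi/|\xi|)$. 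Performing the $y$-integration then identifies the limit of the rescaled Wigner transform as $\rho(|\xi|)\,\delta(p-\chi(\xi/|\xi|))$; testing against $\varphi$ with the same substitution (the $\e^{-d/2}$ from the amplitude cancelling the Jacobian $\e^{d/2}$) reproduces exactly the integral $\int_{\R^d}\rho(|\xi|)\varphi(0,\chi(\xi/|\xi|))\,d\xi$ obtained above for $\beta$. Hence $w=\beta$ and both equal the stated measure.

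The main obstacle is the rigorous justification of the phase limit near the singular point $x=0$: one must control the difference quotient of $S$ uniformly enough to pass to the limit inside the oscillatory $y$-integral, using that $S\in C^2_{\rm b}(\R^d/\{0\})$ so that $\nabla S$ is bounded with the prescribed radial limits, while the locus $\xi=0$ is Lebesgue-null and carries no mass of $\rho(|\xi|)\,d\xi$. A secondary point, needed for $w$ to exist and to recover the densities via \eqref{limdensities}--\eqref{limdensities1}, is to verify that $u^\e$ is $\e$-oscillatory in the sense of \eqref{cond}; this follows since $\|\e\nabla a^\e\|_{L^2}=O(\sqrt\e)$ and $\|a^\e\nabla S\|_{L^2}=O(1)$ by boundedness of $\nabla S$, so that $\sup_{0<\e\le1}\|\e\nabla u^\e\|_{L^2}<\infty$.
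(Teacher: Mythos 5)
Your computation of the limiting Bohmian measure is correct and is essentially the paper's own argument: since the amplitude is real, the velocity field is $\nabla S$, the rescaling $x=\sqrt\e\,z$ (the paper uses the equivalent polar substitution $x=\e^{1/2}r\omega$) reduces the limit to the radial hypothesis $\nabla S(\delta\omega)\to\chi(\omega)$ with the direction $\omega=z/|z|$ \emph{fixed}, and dominated convergence with majorant $\|\varphi\|_\infty\,\rho(|z|)$ finishes. Where you and the paper part ways is the Wigner half: the paper does not compute $w$ at all, it simply quotes the computation already carried out in \cite{GaMa}, whereas you propose to redo that computation by rescaling the Wigner integral.

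That direct computation, as you set it up, has a genuine gap, and it sits exactly at the step you flag as ``the main obstacle.'' Writing the phase difference by the fundamental theorem of calculus,
\begin{equation*}
\frac1\e\bigl[S(\sqrt\e\,\xi-\tfrac\e2 y)-S(\sqrt\e\,\xi+\tfrac\e2 y)\bigr]
=-\int_{-1/2}^{1/2} y\cdot\nabla S\bigl(\sqrt\e\,(\xi+t\sqrt\e\,y)\bigr)\,dt ,
\end{equation*}
the gradient is evaluated at points whose \emph{directions} $(\xi+t\sqrt\e\,y)/|\xi+t\sqrt\e\,y|$ drift with $\e$, while the hypothesis on $S$ only gives convergence of $\nabla S(\delta\omega)$ along each \emph{fixed} ray $\omega$. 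Pointwise radial limits of a bounded vector field do not imply convergence along such drifting directions, so this step does not follow from the stated assumptions alone; you need uniformity in $\omega$, e.g.\ uniform convergence of $\nabla S(\delta\,\cdot)\to\chi$ on $\mathbb S^{d-1}$. This can be extracted from equicontinuity of the family $\{\nabla S(\delta\,\cdot)\}_{0<\delta<1}$, which holds under a Hessian bound of the form $|\nabla^2 S(x)|\le C/|x|$ near the origin (satisfied by genuine cone singularities such as $S(x)=|x|$), combined with the continuity of $\chi$; but some such argument must be supplied, it is not automatic. Note the contrast with your Bohmian computation, which is immune to this problem precisely because there $\nabla S$ is evaluated at exactly radial points $\sqrt\e\,|z|\,(z/|z|)$. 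A secondary point: your appeal to dominated convergence in the $(\xi,y)$ integral needs an $\e$-independent majorant, whereas the natural bound $\rho^{1/2}(|\xi-\tfrac{\sqrt\e}2 y|)\,\rho^{1/2}(|\xi+\tfrac{\sqrt\e}2 y|)\,|\tilde\varphi(\sqrt\e\,\xi,y)|$ moves with $\e$; this is repairable (Cauchy--Schwarz in $\xi$ plus continuity of translation in $L^1$, or a generalized dominated convergence lemma), but it should be addressed. The most economical fix for both issues is the one the paper takes: quote \cite{GaMa} for $w$ and compute only $\beta$.
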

\begin{remark} To our knowledge this is the first example in which the projection of $\beta = w$ onto position space 
is absolutely continuous with respect to the Lebesgue measure on $\R^d_p$.\end{remark}
\begin{proof} We first note that, by assumption,
\[
\vert u^\e(x)\vert^2=\e^{-d/2}\rho \left(\frac{\vert x\vert}{\e^{1/2}}\right)\stackrel{\e\rightarrow 0_+
}{\longrightarrow}\delta(x)\int_{\bbR}\rho(\vert y
\vert)dy.
\]
Moreover, it has been already computed in \cite{GaMa}, that 
\[
\lim_{\e \to 0_+}\langle w^\e[u^\e],\vp \rangle  =\frac 1{\vert \mathbb S^{d-1}\vert} \int_\R \rho (\vert y\vert)dy \int_{\mathbb S^{d-1}} \vp(0, \chi(\omega)) d\omega,
\]
for any test function $\varphi\in C_0(\R^{2d})$. Thus, we see that the Wigner measure $w(x,p)$ is of the form given above.

It remains to explicitly compute the limiting Bohmian measure. To this end, we consider the action of $\beta^\e$ onto any 
test function $\varphi(x,p)$, i.e.
\begin{align*} 
\langle\beta^\e,\vp\rangle = &\  {\e^{-d/2}}\int_{\R^d} \rho\left( \frac{\vert x\vert}{\e^{1/2}}\right )\vp(x,\nabla S(x))dx\\
= & \ \int_{\mathbb S^{d-1}}\int_0^\infty\rho(r)\vp(\e^{1/2}r\omega,\nabla S(\e^{1/2}r\omega))r^{d-1}dr d\omega,
\end{align*}
by setting $y=r\omega$. It easily follows that as $\e \to 0_+$:
\[
\langle\beta^\e,\vp\rangle  \sim
\int_{\mathbb S^{d-1}}\int_0^\infty\rho(r)r^{d-1}\vp(0,\nabla S(\e^{1/2}r\omega)drd\omega.
\]
Keeping $r>0$, $\omega\in \mathbb S^{d-1}$ fixed, we see that for $\e$  sufficiently small,
\[
\nabla S(\e^{1/2}r\omega)\stackrel{\e\rightarrow 0_+
}{\longrightarrow}\chi(\omega).
\]
By dominated convergence, we therefore conclude 
\[
\langle\beta^\e,\vp\rangle \stackrel{\e\rightarrow 0_+
}{\longrightarrow} \frac 1{\vert \mathbb S^{d-1}\vert}\int_\R \rho(\vert y\vert)dy\int_{\mathbb S^{d-1}}\vp(0,\chi(\omega))d\omega,
\]
and the assertion is proved.
\end{proof}

\end{document}